\documentclass[a4paper]{amsproc}
\usepackage{amssymb}
\usepackage{amscd}
\usepackage[dvips]{graphicx}

\theoremstyle{plain}
 \newtheorem{thm}{Theorem}[section]
 \newtheorem{prop}{Proposition}[section]
 
 \newtheorem{cor}{Corollary}[section]
\theoremstyle{definition}
 \newtheorem{exm}{Example}[section]
 \newtheorem{rem}{Remark}[section]

\numberwithin{equation}{section}

\renewcommand{\setminus}{\smallsetminus}

\def\ji {\char'032}

\def\m  {\char'176}

\font\srit=wncyi8
 \font\srrm=wncyr8

\newcommand{\R}{\mathbb{R}}

\renewcommand{\le}{\leqslant}

\renewcommand{\setminus}{\smallsetminus}

\title[NOETHER SYMMETRIES IN TIME-DEPENDENT MECHANICS]{NOETHER SYMMETRIES AND INTEGRABILITY IN TIME-DEPENDENT HAMILTONIAN MECHANICS}
\subjclass[2010]{37J15, 37J35, 37J55, 70H25, 70H33}
\author[Jovanovi\'c]{\bfseries Bo\v zidar Jovanovi\'c}

\address{
Mathematical Institute SANU \\
Serbian Academy of Sciences and Arts \\
Kneza Mihaila 36, 11000 Belgrade\\
Serbia}
\email{bozaj@mi.sanu.ac.rs}

\begin{document}
\baselineskip=14pt

\begin{abstract}
We consider Noether symmetries within Hamiltonian setting as
transformations that preserve Poincar\'e--Cartan form, i.e., as
symmetries of characteristic line bundles of nondegenerate
1-forms. In the case when the Poincar\'e--Cartan form is contact,
the explicit expression for the symmetries in the inverse Noether
theorem is given. As examples, we consider natural mechanical
systems, in particular the Kepler problem. Finally, we prove a
variant of theorem on the complete (non-commutative) integrability in terms
of Noether symmetries of time-dependent Hamiltonian systems.
\end{abstract}

\maketitle

\tableofcontents

\section{Introduction}

\subsection{}

Since Emmy Noether's paper
\cite{Noether} on integrals related to invariant variational problems, there has been a
lot of efforts on its generalization, geometrical formulation, as well as on the application in  various concrete problems (e.g, see \cite{KS, O}).
For finite dimensional Lagrangian systems, Noether's general statement \cite{Noether} takes the following simple form.


Consider a Lagrangian system $(Q,L)$, where $Q$ is a configuration
space and $L(t,q,\dot q)$ is a Lagrangian, $L: \mathbb R\times
TQ\to \R$. Let $q=(q_1,\dots,q_n)$ be local coordinates on $Q$.
The motion of the system is described by the Euler--Lagrange
equations
\begin{equation} \label{Lagrange}
\frac{d}{dt}\frac{\partial L}{\partial \dot q_i}=\frac{\partial
L}{\partial q_i}, \quad i=1,\dots,n.
\end{equation}

One of the basic principles of classical mechanics, \emph{the
Hamiltonian principle of least action}, or \emph{the principle of
stationary action}, says that the solutions of the Euler--Lagrange
equations are the critical points of the action integral
\begin{equation}\label{SL}
S_L({\gamma})=\int_a^b L(t,q,\dot q)dt
\end{equation}
in a class of curves ${\gamma}:
[a,b]\to Q$ with fixed endpoints ${\gamma}(a)=q_0$,
${\gamma}(b)=q_1$ (e.g., see \cite{Ar, MR, Jo, Ko, Wh}).

Consider the action of an one-parameter group of
diffeomorphisms $g_s$ on $\mathbb R\times Q$ with the induced
vector field
$\nu=(\tau,\xi)\vert_{(t,q)}=\frac{d}{ds}\vert_{s=0}g_s(t,q)$. After
prolongation to $\mathbb R\times TQ$, the induced vector field
reads
\begin{equation}
\label{simetrija}\hat\nu=\tau(t,q)\frac{\partial }{\partial
t}+\sum_i \xi^i(t,q)\frac{\partial}{\partial q_i}+{\hat\xi}^i(t,q,\dot
q)\frac{\partial}{\partial \dot q_i},
\end{equation}
where (e.g., see \cite{SC1, CPT})
$$
{\hat\xi}^i=\frac{\partial
\xi^i}{\partial t}-\dot q_i\frac{\partial \tau}{\partial t}+\sum_j\big(\frac{\partial \xi^i}{\partial q_j}\dot
q_j-\dot q_i\frac{\partial
\tau}{\partial q_j}\dot q_j\big).
$$

The group $g_s$ is a \emph{Noether symmetry} of the Lagrangian system if it preserves the action
functional \eqref{SL},
that is, if
\begin{equation}\label{uslov}
\frac{\partial L}{\partial t}\tau+\sum_i{\big(\frac{\partial L}{\partial
q_i}\xi^i+\frac{\partial L}{\partial \dot
q_i}{\hat\xi}^i\big)}+L \big(\frac{\partial \tau}{\partial
t}+\sum_j\frac{\partial \tau}{\partial q_j}\dot q_j \big)=0.
\end{equation}

The Noether theorem says that if $g_s$ is a Noether symmetry then
$$
I(t,q,\dot q)=\frac{\partial L}{\partial \dot q}(\xi-\tau\dot
q)+L\tau=\sum_i{\frac{\partial L}{\partial \dot q_i}(\xi^i-\tau\dot
q_i)}+L\tau
$$
is the first integral of the Euler-Lagrange equations. More
generally, if we have the invariance of \eqref{SL} modulo the
integral of ${df}/{dt}$ , that is at the right hand side of
\eqref{uslov} we have ${df}/{dt}$, for some function $f(t,q)$ (so
called \emph{gauge term}), then the integral is $I(t,q,\dot q)-f(t,q)$.

Two cases are of particular interest. If $\tau\equiv 0$ then \eqref{uslov}
reduces to the condition that the Lagrnagian $L$ is invariant with
respect to $g_s$,
$$
\sum_i{\frac{\partial L}{\partial q_i}\xi^i+\frac{\partial L}{\partial
\dot q_i}{\hat\xi}^i}=0,
$$
and the Noether integral takes the basic form
$$
I(t,q,\dot q)=\frac{\partial L}{\partial \dot
q}(\xi)=\sum_i\frac{\partial L}{\partial \dot q_i}\xi^i.
$$
In particular, when the Lagrangian does not depend on $q_1$, then $q_1$ is \emph{ignorable (cyclic) coordinate}
and the integral is ${\partial L}/{\partial \dot q_1}$ (e.g, see \cite{Ar, Wh}).

Secondly, if the Lagrangian does not depend on time, we can take
the translations in time: $g_s(t,q)=(t+s,q)$. Then the vector
field \eqref{simetrija} is simply ${\partial}/{\partial t}$ and
the integral $I$ becomes the energy of the system multiplied by
$-1$:
$$
I=-E=-\sum_i{\frac{\partial L}{\partial \dot q_i}\dot q_i}+L.
$$

\subsection{}
The Noether theorem can be seen as a part of time-dependent
mechanics that is studied and geometrically formulated both in the
Lagrangian and Hamiltonian setting (e.g., see \cite{Al, BEADM,
CLL, CMN, Cr, CPT, D, DV, LMP, LM, KKPS, GS3, GS2} and references
therein\footnote{This list is far away to be a compete list of
contributions on the subject.}). In Sarlet and Cantrijn
\cite{SC1}, one can find a review of various approaches on the
Noether theorem in the Lagrangian framework for velocity dependent
transformations, as well as a geometrical setting for the
equivalence of the first integrals and symmetries of the
Lagrangian system considered as a characteristic system of the
two-form $d\alpha$ ($\alpha$ being
Poincar\'e--Cartan form). 

The aim of this paper is to present the problem through the perspective of contact geometry, continuing the study of the Maupertuis principle,
isoenergetic, and partial integrability \cite{Jo, Jo2}.
We consider Noether symmetries
as symmetries of characteristic line bundles of nondegenerate
1-forms (Theorems \ref{prva}, \ref{druga}).
In the case of time-dependent Hamiltonian systems, Noether symmetries are transformations that preserve Poincar\'e--Cartan
form (see Proposition \ref{stav}), and, via
Legendre transformation, this is equivalent to Crampin's notion of symmetry of Lagrangian systems \cite{Cr}.
This will allow us to use contact geometry for the inverse Noether theorem in Section 4.

The notion of a weak Noether
symmetry is also given and the relation with the Noether
symmetries is established (Proposition \ref{jaka-slaba}). The Noether symmetry
is a natural generalization of classical Noether symmetry
described above (see Proposition \ref{uopstenje}), while the notion of the week
Noether symmetry corresponds to the classical Noether symmetry
with the gauge term.

In the case when the Poincar\'e--Cartan form is contact, the explicit
expression for the Noether symmetry for a given first integral
without using the gauge term is given (Theorem \ref{treca}). As examples, we
consider natural mechanical systems (Corollary \ref{posledica}), in particular the Kepler
problem (Example \ref{Kepler}).

Finally, in Section 5, we obtain a variant of the complete
(non-commutative) integrability in terms of Noether symmetries of
time-dependent Hamiltonian systems (Theorem \ref{1.i}, Corollary
\ref{2.i}).

\section{Noether symmetries in the Hamiltonian formulation}

\subsection{} Let $(Q,L)$ be a Lagrangian system.
The \emph{Legendre transformation} $\mathbb FL: TQ\to T^*Q $ is defined
by
\begin{equation}\label{legendre}
\mathbb FL(t,q,\xi)\cdot \eta=
\frac{d}{ds}\vert_{s=0}L(t,q,\xi+s\eta) \quad \Longleftrightarrow
\quad p_i=\frac{\partial L}{\partial \dot q_i}, \quad i=1,\dots,
n,
\end{equation}
where $\xi,\eta\in T_q Q$ and $(q_1,\dots,q_n, p_1,\dots,p_n)$ are
canonical coordinates of the cotangent bundle $T^*Q$. In order to
have a Hamiltonian description of the dynamics we suppose that the
Legendre transformation \eqref{legendre} is a diffeomorphism. The
corresponding Lagrangian $L$ is called \emph{hyperregular}
\cite{MR}.

Let $L(t,q,\dot q)$ be a hyperregular Lagrangian. We can pass from
velocities $\dot q_i$ to the momenta $p_j$ by using the standard
Legendre transformation \eqref{legendre}. In the coordinates
$(q,p)$ of the cotangent bundle $T^*Q$, the equations of motion
\eqref{Lagrange} read:
\begin{equation} \label{1}
\frac{dq_i}{dt}=\frac{\partial H}{\partial p_i},\qquad
\frac{dp_i}{dt}=-\frac{\partial H}{\partial q_i}, \qquad
i=1,\dots,n,
\end{equation}
where the Hamiltonian function $H(t,q,p)$ is the \emph{Legendre
transformation} of  $L$
\begin{equation} \label{haml}
 H(t,q,p)=E(t,q,\dot q)\vert_{\dot q=\mathbb FL^{-1}(t,q,p)}=\mathbb FL(t,q,\dot q)\cdot \dot q-L(t,q,\dot
q)\vert_{\dot q=\mathbb FL^{-1}(t,q,p)}.
\end{equation}

Let $p dq=\sum_i p_i dq_i$ be the \emph{canonical 1-form}
and
$$
\omega=d(p dq)=dp \wedge dq=\sum_{i=1}^n dp_i\wedge dq_i
$$
the {\it canonical symplectic} form of the cotangent bundle
$T^*Q$. The equations \eqref{1} are Hamiltonian, i.e.,  they can be
written as $\dot x=X_H$, where the Hamiltonian vector field $X_H$
is defined by
\begin{equation*}\label{Hamiltonian}
i_{X_H}\omega (\,\cdot\,)=\omega(X_H,\,\cdot\,)=-dH(\,\cdot\,).
\end{equation*}

\subsection{Noether symmetries} Consider the \emph{Poincar\'e--Cartan} 1-form
$$
\alpha=pdq-Hdt
$$
on the extended phase space $\R\times T^*Q (t,q,p)$, where $H:
\R\times T^*Q  \to \R$ is a Hamiltonian function. The phase
trajectories of the canonical equations \eqref{1} are extremals of
the action
\begin{equation}\label{poincare}
A_H(\gamma)=\int_\gamma \alpha=\int_\gamma pdq-Hdt
\end{equation}
in the class of curves $\gamma(t)=(t,q(t),p(t))$ connecting the
subspaces $\{t_0\} \times T^*_{q_0} Q$ and $\{t_1\}\times
T^*_{q_1} Q$ \cite{Ar, MR} (Poincar\'e's modification of the
Hamiltonian principle of least action \cite{P}). Obviously, we can
replace $(T^*Q,dp\wedge dq)$ by an arbitrary exact symplectic
manifold $(P,\omega=d\theta)$.

We shall say that the vector field
$$
\zeta=\tau(t,q,p)\frac{\partial
}{\partial t}+\sum_i{\xi^i(t,q,p)\frac{\partial}{\partial
q_i}+\eta^i(t,q,p)\frac{\partial}{\partial p_i}},
$$
i.e., the induced one-parameter group of diffeomeomorphisms
$g_s^\zeta$ of $\mathbb R\times T^*Q$,
$$
\zeta=\frac{d}{ds}g_s^\zeta\vert_{s=0}(t,q,p),
$$
is a {\it Noether symmetry} of the Hamiltonian system \eqref{1} if
the {Poincar\'e--Cartan} 1-form is preserved.
Then, by the
analogy with the Lagrangian formulation, $g_s^\zeta$
preserves the action functional \eqref{poincare}. The above
definition is suitable for a contact approach presented in Section 4.

We shall say that $\zeta$ is a \emph{weak Noether symmetry} if we have the invariance of the perturbation of the
{Poincar\'e--Cartan} 1-form 1-form by  a closed 1-form $\beta$ modulo the differential of the function $f$:
\begin{equation}\label{wns}
L_{\zeta}(pdq-Hdt+\beta)=df.
\end{equation}

The function $f(t,q,p)$ plays a role of a gauge term
in the classical formulation. The closed form $\beta$ corresponds to the fact that the
solutions of the canonical equation \eqref{1} are also extremal of the perturbed action
$$
A_{H}(\gamma)=\int_\gamma p dq-Hdt+\beta.
$$

\begin{thm}\label{prva}
Let $\zeta$ be a waek Noether symmetry satisfying \eqref{wns}. Then
\begin{itemize}

\item[i)  ] The function
$$
J=i_\zeta(p dq-H\,dt+\beta)-f=\sum_i{p_i\xi^i}-H\tau+\beta(\zeta)-f
$$
is the first integral of the Hamiltonian equations \eqref{1}.

\item[ii) ] The integral $J$ is also preserved under the flow of
$g_s^\zeta$.

\item[iii)] The one-parameter group of diffeomorphisms $g_s^\zeta$
permutes the trajectories of the Hamiltonian equations in the
extended phase $\mathbb R\times T^*Q$ space modulo
reparametrization.

\end{itemize}
\end{thm}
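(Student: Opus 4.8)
The plan is to exploit the defining relation $L_\zeta(pdq - Hdt + \beta) = df$ together with Cartan's formula $L_\zeta = di_\zeta + i_\zeta d$. Write $\alpha_\beta = pdq - Hdt + \beta$, so that $J = i_\zeta\alpha_\beta - f$. Applying Cartan's formula gives $d(i_\zeta\alpha_\beta) + i_\zeta d\alpha_\beta = df$, hence $dJ = d(i_\zeta\alpha_\beta - f) = -i_\zeta d\alpha_\beta$. Since $\beta$ is closed, $d\alpha_\beta = d(pdq - Hdt) = dp\wedge dq - dH\wedge dt =: \Omega$, the same two-form whose characteristic line bundle is spanned by the vector field generating the Hamiltonian flow in the extended phase space. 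So the core identity to carry through everything is $dJ = -i_\zeta\Omega$.

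For part (i), let $Z = \partial_t + X_H$ be the suspension vector field whose integral curves are exactly the solutions $\gamma(t) = (t,q(t),p(t))$ of \eqref{1}; one checks directly (or quotes the standard fact) that $i_Z\Omega = 0$, i.e. $Z$ spans the characteristic (kernel) line bundle of the nondegenerate $1$-form $\alpha$. Then $\frac{d}{dt}J\big|_\gamma = dJ(Z) = -i_\zeta\Omega(Z) = -\Omega(\zeta,Z) = \Omega(Z,\zeta) = (i_Z\Omega)(\zeta) = 0$, so $J$ is constant along trajectories. For part (ii), note $L_\zeta J = i_\zeta dJ = -i_\zeta i_\zeta\Omega = 0$ since $i_\zeta i_\zeta = 0$ for any $2$-form; thus $J$ is invariant under $g_s^\zeta$. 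For part (iii), the point is that $g_s^\zeta$ preserves the characteristic line bundle of $\Omega$: from $L_\zeta\alpha_\beta = df$ we get $L_\zeta\Omega = d L_\zeta\alpha_\beta = d\,df = 0$, so $\zeta$ is an infinitesimal symmetry of $\Omega$, hence of its characteristic distribution $\ker\Omega = \Span\{Z\}$. Therefore $(g_s^\zeta)_* Z = \lambda_s Z$ for some function $\lambda_s$, which says precisely that $g_s^\zeta$ maps unparametrized integral curves of $Z$ to unparametrized integral curves of $Z$, i.e. permutes the trajectories of \eqref{1} in $\R\times T^*Q$ up to reparametrization.

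The only mild subtlety — and the step I would be most careful about — is the bookkeeping for part (iii): one should make sure $Z$ genuinely spans $\ker\Omega$ and that $\Omega$ has constant rank $2n$ (which is immediate since $dp\wedge dq$ is symplectic on the fibers and the $dH\wedge dt$ term only shifts the kernel), so that "preserving $\Omega$" really does translate into "preserving the $1$-dimensional characteristic foliation." The reparametrization factor $\lambda_s$ is exactly why trajectories are permuted only up to reparametrization rather than mapped to trajectories with the same time parametrization; no further estimate is needed. Everything else is a direct consequence of Cartan's formula, the closedness of $\beta$, and the identity $i_\zeta i_\zeta = 0$, so there is no real analytic obstacle — the content is entirely in recognizing the Hamiltonian dynamics as the characteristic system of $\Omega = d\alpha_\beta$ and reading off the three conclusions from $dJ = -i_\zeta\Omega$ and $L_\zeta\Omega = 0$.
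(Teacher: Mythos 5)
Your proposal is correct and follows essentially the same route as the paper, which proves this theorem as the special case $\alpha = p\,dq - H\,dt$ of Theorem \ref{druga}: the key identity $dJ = -i_\zeta\,d\alpha$ from Cartan's formula and closedness of $\beta$ yields (i) and (ii) exactly as you argue. For (iii) the paper computes $i_{[\zeta,Z]}d\alpha = L_\zeta(i_Z d\alpha) - i_Z(L_\zeta d\alpha) = 0$ directly, i.e.\ the infinitesimal form of your statement that the flow of $\zeta$ preserves $\ker d\alpha$; the two formulations are equivalent and your rank bookkeeping is the right point to check.
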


The notion of week Noether symmetries for $\beta=0$ is equivalent, via
Legendre transformation, to the symmetries of Lagrangin systems
considered by Crampin \cite{Cr}, see also  Sarlet and Cantrijn
\cite{SC1, SC2}. The proof of Theorem \ref{prva} is similar to the
proofs presented in \cite{SC1, SC2, Cr} and for the completeness of the exposition it will be given in the
next section (see the proof of Theorem \ref{druga}).
Also, recently, a similar approach to the higher order Lagrangian problems is given in \cite{FS}.

By definition, $\zeta$ is a Noether symmetry if and only if the Lie
derivative of the {Poincar\'e--Cartan} 1-form vanish:
\begin{align*}
0 &= L_\zeta\alpha\\
&= i_\zeta(d\alpha)+d(i_\zeta\alpha)\\
&= i_\zeta(dp\wedge dq-dH\wedge dt)+ d(p\xi-H\tau)\\
&= \eta dq-\xi dp + \tau dH-dH(\zeta)dt+ d(p\xi-H\tau)\\
&= \eta dq + pd\xi-Hd\tau-dH(\zeta)dt.
\end{align*}

Comparing the components with $dp_j$, $dq_j$, and $dt$ we get the
following statement.

\begin{prop}\label{stav}
$\zeta$ is a Noether symmetry if and only if
\begin{eqnarray}
\label{u1} && \sum_i p_i\frac{\partial\xi^i}{\partial
p_j}-H\frac{\partial\tau}{\partial p_j}=0, \\
\label{u2} && \eta^j+\sum_i p_i\frac{\partial\xi^i}{\partial
q_j}-H\frac{\partial\tau}{\partial q_j}=0, \qquad \qquad\quad j=1,\dots,n,\\
\label{u3} && \sum_i \big(p_i\frac{\partial\xi^i}{\partial
t}-\eta^i\frac{\partial
H}{\partial p_i}-\xi^i\frac{\partial H}{\partial
q_i}\big)-H\frac{\partial\tau}{\partial t}-\tau\frac{\partial H}{\partial t}=0.
\end{eqnarray}
\end{prop}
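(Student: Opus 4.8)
The plan is to take as given the identity derived immediately above the statement, namely
$L_\zeta\alpha = \eta\,dq + p\,d\xi - H\,d\tau - dH(\zeta)\,dt$,
obtained from Cartan's formula $L_\zeta\alpha = i_\zeta(d\alpha) + d(i_\zeta\alpha)$, and simply to expand every term in the coordinate coframe $\{dt,\,dq_1,\dots,dq_n,\,dp_1,\dots,dp_n\}$ on $\mathbb R\times T^*Q$. Since these $2n+1$ one-forms are pointwise linearly independent, $L_\zeta\alpha$ vanishes identically if and only if each of its components in this coframe vanishes; reading off the coefficients of $dp_j$, of $dq_j$, and of $dt$ will produce exactly \eqref{u1}, \eqref{u2}, and \eqref{u3}.

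Concretely, I would write $d\xi^i = \frac{\partial\xi^i}{\partial t}dt + \sum_j\frac{\partial\xi^i}{\partial q_j}dq_j + \sum_j\frac{\partial\xi^i}{\partial p_j}dp_j$ and the analogous expansion for $d\tau$, together with $dH(\zeta) = \tau\frac{\partial H}{\partial t} + \sum_i\xi^i\frac{\partial H}{\partial q_i} + \sum_i\eta^i\frac{\partial H}{\partial p_i}$. Substituting these into the displayed expression for $L_\zeta\alpha$ and collecting terms: the coefficient of $dp_j$ is $\sum_i p_i\frac{\partial\xi^i}{\partial p_j} - H\frac{\partial\tau}{\partial p_j}$; the coefficient of $dq_j$ is $\eta^j + \sum_i p_i\frac{\partial\xi^i}{\partial q_j} - H\frac{\partial\tau}{\partial q_j}$, the leading $\eta^j$ arising from the $\eta\,dq$ term; and the coefficient of $dt$ is $\sum_i p_i\frac{\partial\xi^i}{\partial t} - H\frac{\partial\tau}{\partial t} - dH(\zeta)$. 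Setting each of these to zero gives the three families of equations in the statement, proving the forward implication, and the converse is immediate since the very same computation shows that \eqref{u1}--\eqref{u3} force $L_\zeta\alpha = 0$.

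There is essentially no obstacle: the whole content is coordinate bookkeeping, and the only point deserving a remark is the elementary fact that a one-form on $\mathbb R\times T^*Q$ is zero precisely when all its components relative to $dt,dq_i,dp_i$ vanish, which holds because $(t,q,p)$ is a coordinate system. It is worth adding that, for fixed $\tau$ and $\xi^i$ satisfying \eqref{u1}, equation \eqref{u2} determines the remaining components $\eta^j$ algebraically, so that a Noether symmetry is governed by $\tau,\xi$ together with the constraints \eqref{u1} and the single first-order equation \eqref{u3}; this reduction will be convenient later for the inverse Noether theorem.
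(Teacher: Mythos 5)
Your proposal is correct and follows exactly the paper's own route: the paper computes $L_\zeta\alpha=\eta\,dq+p\,d\xi-H\,d\tau-dH(\zeta)\,dt$ via Cartan's formula and then, as you do, reads off the coefficients of $dp_j$, $dq_j$, and $dt$ in the coordinate coframe to obtain \eqref{u1}--\eqref{u3}, with the converse immediate from the same identity. Your closing remark that \eqref{u2} determines $\eta^j$ algebraically from $\tau,\xi$ is a correct observation not made explicitly by the paper at this point, though it is implicitly used in the proof of Proposition \ref{uopstenje}.
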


\begin{prop}\label{uopstenje}
If the Lagrangian $L$ and the Hamiltonian $H$ are
related by the Legendre transformation \eqref{legendre},
\eqref{haml} and if $\zeta$ is a Noether symmetry of the
Hamiltonian equation \eqref{1} with $\xi^i=\xi^i(t,q)$,
$\tau=\tau(t,q)$, then the classical invariance condition \eqref{uslov} is
satisfied.
\end{prop}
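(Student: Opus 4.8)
The plan is to start from the algebraic characterization of a Noether symmetry in Proposition~\ref{stav} and to rewrite conditions \eqref{u1}--\eqref{u3} on $\R\times T^*Q$ in terms of Lagrangian quantities on $\R\times TQ$. Since by hypothesis $\tau$ and $\xi^i$ do not depend on $p$, condition \eqref{u1} holds identically, whereas \eqref{u2} becomes an explicit formula for the fibre components of $\zeta$,
$$
\eta^j=H\frac{\partial\tau}{\partial q_j}-\sum_i p_i\frac{\partial\xi^i}{\partial q_j},\qquad j=1,\dots,n,
$$
so that the entire content of the assumption that $\zeta$ is a Noether symmetry is now carried by \eqref{u3} alone, with every $\eta^j$ eliminated through the formula above.

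Next I would collect the standard identities attached to the Legendre transformation \eqref{legendre}, \eqref{haml}: differentiating $H(t,q,p)=\sum_i p_i\dot q_i-L(t,q,\dot q)$ with $\dot q=\mathbb FL^{-1}(t,q,p)$ and using $p_i=\partial L/\partial\dot q_i$ gives
$$
\frac{\partial H}{\partial p_i}=\dot q_i,\qquad \frac{\partial H}{\partial q_i}=-\frac{\partial L}{\partial q_i},\qquad \frac{\partial H}{\partial t}=-\frac{\partial L}{\partial t},\qquad H+L=\sum_i \dot q_i\frac{\partial L}{\partial\dot q_i}.
$$
Substituting the formula for $\eta^j$, these identities, and $p_i=\partial L/\partial\dot q_i$ into \eqref{u3} turns it into an identity involving only $L$, its partial derivatives, $\dot q$, and the first derivatives of $\tau$ and $\xi^i$ with respect to $t$ and $q$.

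It then remains to recognize this identity as \eqref{uslov}. For that I would group the terms $\sum_i\frac{\partial L}{\partial\dot q_i}\big(\frac{\partial\xi^i}{\partial t}+\sum_j\frac{\partial\xi^i}{\partial q_j}\dot q_j\big)$ together with $-(H+L)\big(\frac{\partial\tau}{\partial t}+\sum_j\frac{\partial\tau}{\partial q_j}\dot q_j\big)$, the latter obtained by splitting $\sum_i\dot q_i\,\partial L/\partial\dot q_i=H+L$; by the formula for $\hat\xi^i$ in the prolongation \eqref{simetrija} these two combine into $\sum_i\frac{\partial L}{\partial\dot q_i}\hat\xi^i$. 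The remaining $H$-valued terms — those created by the splitting $H+L$, those coming from $\eta^j$, and the explicit $H\,\partial\tau/\partial t$ and $H\,\partial\tau/\partial q_j$ in \eqref{u3} — cancel in pairs, leaving exactly
$$
\frac{\partial L}{\partial t}\tau+\sum_i\Big(\frac{\partial L}{\partial q_i}\xi^i+\frac{\partial L}{\partial\dot q_i}\hat\xi^i\Big)+L\Big(\frac{\partial\tau}{\partial t}+\sum_j\frac{\partial\tau}{\partial q_j}\dot q_j\Big)=0,
$$
which is \eqref{uslov}.

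The argument is a routine computation and I do not expect a genuine obstacle. The only point requiring care is the bookkeeping of the Hamiltonian-valued terms in the last step — verifying that every occurrence of $H$ really cancels — together with keeping in mind that all Lagrangian expressions are evaluated at $\dot q=\mathbb FL^{-1}(t,q,p)$, so that the identity obtained on $\R\times T^*Q$ is precisely the pullback of the classical invariance condition on $\R\times TQ$. In effect, the hypothesis that $\tau$ and $\xi^i$ are $p$-independent is exactly what forces $\zeta$ to be the Legendre image of a prolonged point transformation, and \eqref{u2} is what supplies its fibre part.
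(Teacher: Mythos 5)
Your proposal is correct and follows essentially the same route as the paper: verify \eqref{u1} trivially, eliminate $\eta^j$ via \eqref{u2}, substitute into \eqref{u3}, and use the standard Legendre identities $\partial H/\partial p_i=\dot q_i$, $\partial H/\partial q_i=-\partial L/\partial q_i$, $\partial H/\partial t=-\partial L/\partial t$ together with the prolongation formula for $\hat\xi^i$ to identify the result with \eqref{uslov}. The only cosmetic difference is that the paper writes the two transformed conditions (its \eqref{u4} and \eqref{u5}) side by side and declares them equivalent, whereas you pass from one to the other in a single direction.
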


\begin{proof} Since $\xi^i$ and $\tau$ do not depend on $p$, the
conditions \eqref{u1} is satisfied. By expressing $\eta^i$ from
\eqref{u2} and substituting into \eqref{u3} we get the following
equation
\begin{align}\label{u4}
\sum_j p_j\big(\frac{\partial\xi^j}{\partial t}+\sum_i\frac{\partial\xi^j}{\partial q_i}\frac{\partial H}{\partial
p_i}\big)-& H\big(\frac{\partial\tau}{\partial t}+\sum_j \frac{\partial\tau}{\partial q_j}\frac{\partial H}{\partial p_j}\big)\\
&-\sum_j\xi^j\frac{\partial H}{\partial q_j}-\tau\frac{\partial H}{\partial t}=0.\nonumber
\end{align}
On the other hand, \eqref{uslov} transforms to
\begin{align}
\label{u5} \frac{\partial L}{\partial t}\tau+\sum_i\frac{\partial
L}{\partial q_i}\xi^i+&\sum_i\frac{\partial L}{\partial \dot
q_i}\big(\frac{\partial \xi^i}{\partial t}+\sum_j\frac{\partial
\xi^i}{\partial q_j}\dot q_j\big)\\
&-\big(\frac{\partial
\tau}{\partial t}+\sum_i\frac{\partial \tau}{\partial q_i}\dot
q_i\big)\big(\sum_j\frac{\partial L}{\partial \dot q_j}\dot
q_j-L\big)=0.\nonumber
\end{align}

Since the Legendre transformation \eqref{legendre}, \eqref{haml}
implies well known identities
$$
\frac{\partial L}{\partial t}=-\frac{\partial H}{\partial t},\qquad \frac{\partial L}{\partial q_i}=-\frac{\partial H}{\partial q_i},
\qquad \dot q_i=\frac{\partial H}{\partial p_i},
$$
the equations \eqref{u4} and \eqref{u5} are equivalent.
\end{proof}

Therefore,
we can consider the above definition of a Noether symmetry as a
natural generalization of the classical one.

\section{Noether symmetries of characteristic line bundles}

\subsection{} Let $(M,\alpha)$ be a $(2n+1)$--dimensional manifold
endowed with a \emph{nondegenerate} 1-form $\alpha$. This mean that
$d\alpha$ has the maximal rank $2n$. The kernel of
$d\alpha$ defines one dimensional distribution
$$
\mathcal L=\bigcup_x \mathcal L_x, \qquad \mathcal L_x=\ker d\alpha\vert_x
$$
 of the
tangent bundle $TM$ called \emph{characteristic line bundle}. Also,
at every point $x\in M$ we have the \emph{horizontal space} $\mathcal H_x$ defined
by
$$
\mathcal H_x=\ker\alpha\vert_x.
$$

In the case when $\alpha\ne 0$, $d\alpha\ne 0$ on $M$, then the
collection of horizontal subspaces $ \mathcal H=\bigcup_x \mathcal
H_x=\bigcup_x\ker\alpha\vert_x $ is a nonintegrable distribution
of $TM$, called \emph{horizontal distribution}. If, in addition,
$\alpha\wedge(d\alpha)^n\ne 0$, then $\alpha$ is a \emph{contact form} and  $(M,\alpha)$ is a
\emph{strictly contact manifold} \cite{LM}.  The horizontal distribution $\mathcal H$
is also referred as \emph{contact} distribution.

The following variational statement is well known.

\begin{thm} \label{var} The integral curves $\gamma: [a,b]\to M$ of the
characteristic line bundle $\mathcal L$ are extremals of the
action functional
$$
A(\gamma)=\int_\gamma \alpha=\int_a^b \alpha(\dot\gamma)dt
$$
in the class of variations $\gamma_s(t)$, such that
$\delta\gamma(a)$ and $\delta\gamma(b)$ are horizontal vectors.
\end{thm}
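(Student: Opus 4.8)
The plan is to compute the first variation of $A$ directly from Stokes' theorem. Given a variation $\gamma_s(t)$ of $\gamma=\gamma_0$ with variation field $V(t)=\frac{\partial}{\partial s}\big|_{s=0}\gamma_s(t)$ along $\gamma$, I would package it as a smooth map $\Gamma\colon[a,b]\times(-\varepsilon,\varepsilon)\to M$, $\Gamma(t,s)=\gamma_s(t)$, and write $\Gamma^*\alpha=P\,dt+Q\,ds$ for smooth functions $P,Q$ on the rectangle. Then $A(\gamma_s)=\int_a^b P(t,s)\,dt$, so $\frac{d}{ds}A(\gamma_s)=\int_a^b\partial_sP\,dt$, while $d(\Gamma^*\alpha)=\Gamma^*d\alpha$ gives the pointwise identity $\partial_tQ-\partial_sP=(\Gamma^*d\alpha)(\partial_t,\partial_s)$. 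Integrating the total $t$-derivative and noting $Q(t,0)=\alpha(V(t))$ yields the first-variation formula
\[
\delta A:=\frac{d}{ds}\Big|_{s=0}A(\gamma_s)=\big[\alpha(V)\big]_a^b-\int_a^b d\alpha(\dot\gamma,V)\,dt .
\]
Equivalently, one may realize $V$ by the flow $\phi_s$ of a local extension $\widetilde V$ of $V$ and apply Cartan's formula $L_{\widetilde V}\alpha=i_{\widetilde V}d\alpha+d\,i_{\widetilde V}\alpha$ to $\int_\gamma\phi_s^*\alpha$; this is the computation underlying Theorems \ref{prva} and \ref{druga}.

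Next I would extract both implications from this formula. The boundary term $\big[\alpha(V)\big]_a^b=\alpha(V(b))-\alpha(V(a))$ vanishes precisely when $\delta\gamma(a)=V(a)$ and $\delta\gamma(b)=V(b)$ lie in $\ker\alpha$, i.e.\ are horizontal — exactly the admissibility condition in the statement. Hence for admissible variations $\delta A=-\int_a^b d\alpha(\dot\gamma,V)\,dt$. If $\gamma$ is an integral curve of $\mathcal L$, then $\dot\gamma(t)\in\ker d\alpha\vert_{\gamma(t)}$ for all $t$, the integrand vanishes identically, and $\delta A=0$, so $\gamma$ is an extremal. Conversely, if $\gamma$ is an extremal, then $\int_a^b d\alpha(\dot\gamma,V)\,dt=0$ for every admissible $V$; taking $V$ compactly supported in $(a,b)$ with arbitrarily prescribed value at a given interior time, the fundamental lemma of the calculus of variations forces $i_{\dot\gamma}d\alpha=0$ at each $t$. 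Since $d\alpha$ has maximal rank $2n$ on the $(2n+1)$-dimensional $M$, this kernel is one-dimensional, so $\dot\gamma$ (assumed nowhere zero, as is customary for integral curves) spans $\mathcal L$ along $\gamma$, which is therefore an integral curve of the characteristic line bundle.

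I expect the genuinely delicate points to be bookkeeping rather than conceptual: justifying the differentiation under the integral sign and correctly identifying the boundary term (both routine once the variation is written as the map $\Gamma$), and, in the converse direction, making precise the passage from ``$i_{\dot\gamma}d\alpha$ annihilates every admissible variation direction'' to ``$i_{\dot\gamma}d\alpha=0$'', which is handled by localizing with bump functions and invoking the constant maximal rank of $d\alpha$. Everything else is a one-line consequence of the displayed first-variation formula. Finally, I would remark that applying this with $\alpha=pdq-Hdt$ on $\R\times T^*Q$ and the admissible class $\delta q(a)=\delta q(b)=0$, $\delta t(a)=\delta t(b)=0$ — on which $\alpha(V)=p\,\delta q-H\,\delta t$ vanishes — recovers the statement that the phase trajectories of \eqref{1} are precisely the extremals of the action \eqref{poincare}.
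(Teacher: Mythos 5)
Your proposal is correct and follows essentially the same route as the paper: both derive the first-variation formula $\delta A=\big[\alpha(V)\big]_a^b-\int_a^b d\alpha(\dot\gamma,V)\,dt$ (the paper via Cartan's magic formula, you via the equivalent Stokes computation on the rectangle, and you note the Cartan route yourself), observe that horizontality of $\delta\gamma(a)$, $\delta\gamma(b)$ kills the boundary term, and conclude that extremality is equivalent to $i_{\dot\gamma}d\alpha=0$. You merely spell out details the paper leaves implicit, namely the differentiation under the integral sign and the bump-function/fundamental-lemma argument in the converse direction.
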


Here,  a {\it variation} of a curve $\gamma: [a,b]\to M$ is a
mapping: $\Gamma: [a,b]\times [0,\epsilon] \to M$, such that
$\gamma(t)=\Gamma(t,0)$, $t\in [a,b]$,
$\delta\gamma(t)=\frac{\partial \Gamma}{\partial s}\vert_{s=0}\in
T_{\gamma(t)} M$, and $\gamma_s(t)=\Gamma(t,s)$.

\begin{figure}[ht]
\includegraphics[width=90mm, height=60mm]{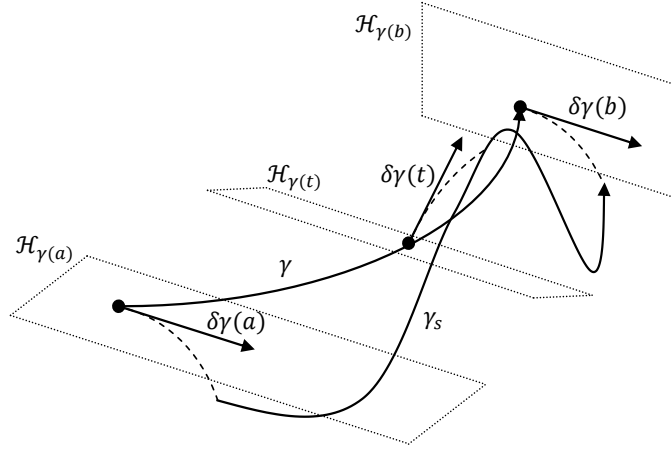}
\caption{Variation of a curve $\gamma$, such that $\delta\gamma(a)$
and $\delta\gamma(b)$ are horizontal vectors.}
\end{figure}

The proof is a direct consequence of Cartan's  formula (e.g., see
\cite{Gr})
$$
L_{\partial/\partial
s}\Gamma^*\alpha\vert_{(t,0)}=\gamma^*(i_{\delta\gamma(t)}
d\alpha)+d\gamma^*(\alpha(\delta\gamma(t))),
$$
which implies
\begin{equation}\label{izvod}
\frac{d}{ds}\left(\int_{\gamma_s}
\alpha\right)\bigg\vert_{s=0}=\int_a^b
d\alpha(\delta\gamma(t),\dot\gamma(t))dt+\alpha(\delta\gamma(b))-\alpha(\delta\gamma(a)).
\end{equation}
For $\delta\gamma(a)$ and $\delta\gamma(b)$ being horizontal,
the expression \eqref{izvod} is equal to zero if and
only if $\dot \gamma$ is in the kernel of the form $d\alpha$. That
is, $\gamma(t)$ is an integral curve of the line bundle $\mathcal
L$.

\begin{exm} \label{primer} As an example we can take the extended phase space
endowed with the {Poincar\'e--Cartan} 1-form
\begin{equation}\label{cm}
(\R\times T^*Q (t,q,p),pdq-Hdt)
\end{equation}
(e.g., see \cite{Ar}). The sections of $\ker d(pdq-Hdt)$ are of the form
$$
Z_\mu=\mu Z,
$$
where
\begin{equation}\label{reb1}
Z=\frac{\partial}{\partial t}+\sum_i{\frac{\partial H}{\partial
p_i}\frac{\partial}{\partial q_i}- \frac{\partial H}{\partial
q_i}\frac{\partial}{\partial p_i}}.
\end{equation}
and $\mu=\mu(t,q,p)$ are smooth functions. Therefore, in this
case, Theorem \ref{var} implies Hamiltonian principle of least
action in the extended phase space. Here, the vector space $T^*_q
Q$, considered as a subspace of $T_{(t,q,p)}\R\times T^*Q$, is a
subspace of the horizontal space
\begin{equation}\label{HS}
\mathcal H_{(t,q,p)}=\big\{\tau\frac{\partial
}{\partial t}+\sum_i{\xi^i\frac{\partial}{\partial
q_i}+\eta^i\frac{\partial}{\partial p_i}}\,\,\big\vert\,\, \tau,\xi^i,\eta^i\in\R,\,\, \sum_i p_i\xi^i=\tau H(t,q,p)\big\}.
\end{equation}
\end{exm}

\begin{rem}{\rm
The vector field \eqref{reb1} is determined by the conditions $i_Z
(dp\wedge dq -dH\wedge dt)=0$ and $dt(Z)=1$. In other words, $Z$
can be seen also as the Reeb vector field of the cosymplectic
manifold $ (\R\times T^*Q (t,q,p),dp\wedge dq -dH\wedge dt,dt).
$
Recall that a cosymplectic manifold $(M,\omega,\eta)$ is a
$(2n+1)$--dimensional manifold $M$ endowed with a closed 2-form
$\omega$ and a closed 1-form $\eta$, such that
$\eta\wedge\omega^n$ is a volume form, which is a natural
framework for the time-dependent Hamiltonian mechanics  (see
\cite{Al, CLL, MNY}). }\end{rem}

\subsection{Noether symmetries and integrals}
Consider the equation
\begin{equation}\label{clb}
\dot x=Z,
\end{equation}
where $Z$ is a section of $\mathcal L$ ($i_Z d\alpha=0$).

We shall say that the vector field $\zeta$, i.e., the induced
one-parameter group of diffeomeomorphisms
$g_s^\zeta$,
is a \emph{Noether symmetry} of the equation \eqref{clb} if it
preserves the 1-form $\alpha$. A similar definition for exterior
differential systems is given in \cite{Gr}.

Note that the vector field $Z$ is also a section of the characteristic
line bundle of a nondegenerate 1-form $\alpha+\beta$, where $\beta$ is
arbitrary closed 1-form on $M$. We refer to the vector field $\zeta$ as a \emph{weak Noether symmetry} if we have the invariance of the perturbation of the
 1-form $\alpha$ by a closed 1-form $\beta$ modulo the differential of the function $f$:
\begin{equation}\label{wns1}
L_{\zeta}(\alpha+\beta)=df.
\end{equation}

\begin{thm}\label{druga}
Let $\zeta$ be a weak Noether symmetry that satisfies \eqref{wns1}. Then:

\begin{itemize}

  \item[i)  ]  The function $ J=i_\zeta(\alpha+\beta)-f$ is the first integral of
\eqref{clb}.

  \item[ii) ] The integral $J$ is preserved under the flow of
  $g_s^\zeta$ as well:
$dJ(\zeta)=0$.

  \item[iii)]  The commutator of vector fields $[Z,\zeta]$ is a
section of $\mathcal L$, i.e., $g_s^\zeta$ permutes the
trajectories of \eqref{clb} modulo reparametrization.

\end{itemize}
\end{thm}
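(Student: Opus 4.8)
The plan is to verify the three assertions directly from the defining relation $L_\zeta(\alpha+\beta)=df$, using Cartan's magic formula and the fact that $i_Z d\alpha=0$ (equivalently, since $\beta$ is closed, $i_Z d(\alpha+\beta)=0$). Write $\tilde\alpha=\alpha+\beta$, so $d\tilde\alpha=d\alpha$ and $Z$ is a section of $\ker d\tilde\alpha$. The hypothesis expands via Cartan to $i_\zeta d\tilde\alpha + d(i_\zeta\tilde\alpha)=df$, i.e.
\begin{equation}\label{keyid}
i_\zeta d\tilde\alpha = d\big(f - i_\zeta\tilde\alpha\big) = -dJ.
\end{equation}

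First I would prove (i). Since $J$ is a function on $M$, it is a first integral of \eqref{clb} precisely when $dJ(Z)=0$. Pairing \eqref{keyid} with $Z$ gives $dJ(Z) = -d\tilde\alpha(\zeta,Z) = d\tilde\alpha(Z,\zeta) = (i_Z d\tilde\alpha)(\zeta) = 0$, using $i_Z d\tilde\alpha = i_Z d\alpha = 0$. This is immediate once \eqref{keyid} is in hand, so there is no real obstacle here.

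Next, part (ii): I must show $dJ(\zeta)=0$, i.e. $J$ is invariant under $g_s^\zeta$. Pairing \eqref{keyid} with $\zeta$ itself gives $dJ(\zeta) = -d\tilde\alpha(\zeta,\zeta) = 0$ by antisymmetry of the 2-form $d\tilde\alpha$. Again this is a one-line consequence of \eqref{keyid}.

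Finally part (iii): I want $[Z,\zeta]$ to be a section of $\mathcal L=\ker d\alpha$, which (since $d\tilde\alpha$ has rank $2n$ with one-dimensional kernel $\mathcal L$) is equivalent to $i_{[Z,\zeta]}d\tilde\alpha=0$. Here I would use the standard identity $i_{[X,Y]} = L_X i_Y - i_Y L_X$ applied to $d\tilde\alpha$, together with $L_Z d\tilde\alpha = d\, i_Z d\tilde\alpha = 0$ (as $d\tilde\alpha$ is closed and $i_Z d\tilde\alpha=0$). Thus $i_{[Z,\zeta]}d\tilde\alpha = L_Z(i_\zeta d\tilde\alpha) - i_\zeta(L_Z d\tilde\alpha) = L_Z(i_\zeta d\tilde\alpha) = L_Z(-dJ) = -d(dJ(Z)) = -d(L_Z J) = 0$ by part (i). Hence $[Z,\zeta]\in\Gamma(\mathcal L)$, so the flow $g_s^\zeta$ carries characteristic curves to characteristic curves, i.e. permutes trajectories of \eqref{clb} up to reparametrization; the conclusion that $g_s^\zeta$ preserves the (unparametrized) foliation by characteristics follows since $\mathcal L$ is $\zeta$-invariant. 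The only mildly delicate point is being careful that ``$[Z,\zeta]$ annihilates $d\tilde\alpha$'' genuinely forces $[Z,\zeta]$ into the line bundle $\mathcal L$ — this uses nondegeneracy of $\alpha$ (maximal rank of $d\alpha$), which is part of the standing hypothesis on $(M,\alpha)$; I expect this to be the step requiring the most care, though it is still short.
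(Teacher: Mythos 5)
Your proof is correct and follows essentially the same route as the paper: Cartan's formula yields the key identity $i_\zeta d(\alpha+\beta)=-dJ$, from which (i) and (ii) are immediate contractions with $Z$ and $\zeta$, and (iii) follows from the interior-product commutator identity (you apply it as $L_Z i_\zeta - i_\zeta L_Z$ and invoke part (i), whereas the paper applies it as $L_\zeta i_Z - i_Z L_\zeta$ and uses $d^2=0$; this is a cosmetic difference). The paper additionally records a variational reinterpretation of (i), but that is supplementary and not needed for the proof.
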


\begin{figure}[ht]
\includegraphics[width=95mm, height=55mm]{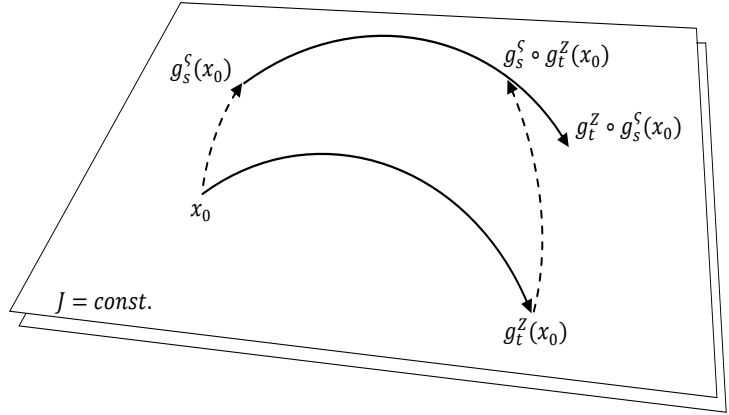}
\caption{Illustration of Theorem \ref{druga}.}
\end{figure}

\begin{proof}
i) From the definition \eqref{wns1} and Cartan's formula
$L_\zeta=i_\zeta\circ d+d\circ i_\zeta$, we have
\begin{equation}\label{izvod2}
i_\zeta d\alpha=-d(i_\zeta(\alpha+\beta))+df=-dJ,
\end{equation}
which proves the first assertion of the statement:
\begin{equation}\label{ZJ}
Z(J)=i_Z(-i_\zeta d\alpha)=d\alpha(Z,\zeta)=0.
\end{equation}

Alternatively, we have the variational interpretation of the first integral. Let $\gamma: [a,b]\to M$ be the trajectory of \eqref{clb} and consider
the variation $\gamma_s=g^\zeta_s(\gamma)$, $\delta\gamma(t)=\zeta\vert_{\gamma(t)}$. From \eqref{wns1}, \eqref{izvod} we get, respectively,
\begin{align*}\label{izvod2}
\frac{d}{ds}\left(\int_{\gamma_s}
\alpha+\beta\right)\bigg\vert_{s=0}&=\int_{\gamma_s} df=f(b)-f(a),\\
\frac{d}{ds}\left(\int_{\gamma_s}
\alpha+\beta\right)\bigg\vert_{s=0}&=\int_a^b
d(\alpha+\beta)(\zeta\vert_{\gamma(t)},\dot\gamma(t))dt+(\alpha+\beta)(\zeta\vert_b)-(\alpha+\beta)(\zeta\vert_a)\\
&=(\alpha+\beta)(\zeta\vert_b)-(\alpha+\beta)(\zeta\vert_a).
\end{align*}
Therefore,
$(\alpha+\beta)(\zeta\vert_a)-f(a)=(\alpha+\beta)(\zeta\vert_b)-f(b)$.

\

ii) Similarly as the equation \eqref{ZJ}, \eqref{izvod2} implies
$$
\zeta(J)=i_\zeta(-i_\zeta d\alpha)=d\alpha(\zeta,\zeta)=0.
$$

iii) We need to prove that $[\zeta,Z]$ belongs to the kernel of $d\alpha$. We have
\begin{align*}
  i_{[\zeta,Z]}d\alpha =& L_\zeta(i_Z d\alpha)-i_Z(L_\zeta d\alpha) \\
   =& -i_Z(i_\zeta d^2\alpha+d(i_\zeta(d\alpha))) \\
   =& -i_Zd(dJ) = 0.
\end{align*}
\end{proof}

\subsection{}
It is clear that in the study of integrals of the Hamiltonian
equations \eqref{1}, we can consider arbitrary section $Z_\mu=\mu
Z$ of $\mathcal L$, where $\mu(t,q,p)$ is a function that is
almost everywhere different from zero. The normalization
$dt(Z_\mu)=1$ implies $\mu\equiv 1$. In the case when $Z$ is
transversal to the horizontal spaces \eqref{HS}:
\begin{equation}\label{kontaktni uslov}
\rho=i_Z(pdq-Hdt)=p\frac{\partial H}{\partial p}-H\ne 0,
\end{equation}
there is another natural normalization $(pdq-Hdt)(Z_\mu)=1$,
$\mu=\rho^{-1}$,
\begin{equation}\label{reb}
Z_{\rho^{-1}}=\rho^{-1}\frac{\partial}{\partial
t}+\rho^{-1}\sum_i\big(\frac{\partial H}{\partial
p_i}\frac{\partial}{\partial q^i}- \frac{\partial H}{\partial
q_i}\frac{\partial}{\partial p_i}\big).
\end{equation}

The condition \eqref{kontaktni uslov} is equivalent to the
property that \eqref{cm} is a strictly contact manifold with the
Reeb vector field \eqref{reb} (see \cite{LM}). If the Hamiltonian
$H$ is obtained from the Lagrangian $L$ under  the Legendre
transformation \eqref{legendre}, \eqref{haml}, the function $\rho$
is the Lagrangian $L(t,q,\dot q)\vert_{\dot q=\dot q(t,q,p)}$. In
\cite{LM} it is referred as an {\it elementary action}.

\section{Inverse Noether theorem}

A natural question is the converse of the Noether theorem (e.g., see \cite{SC1}): if $F$
is the integral of \eqref{1}, is there a
Noether symmetry $\zeta$, such that $F=i_\zeta\alpha$?

A geometrical setting
for the equivalence of the first integrals
and week symmetries can be found in \cite{Cr, SC1}.
With the above notation, one should firstly construct a vector field $\zeta$, such that $i_\zeta d\alpha=dF$. Then $\zeta$ is a week Noether symmetry
with
$L_\zeta\alpha=df$, where $f=F+i_\zeta\alpha$.

It appears that the contact approach provides a simple explicit expression for the Noether symmetry for a generic Hamiltonian function.

\subsection{Contact Hamiltonian vector fields}
Let $(M,\alpha)$ be a {strictly contact manifold}. Then the
contact distribution $\mathcal H$ is transversal to the
characteristic line bundle $\mathcal L$:
$$
TM=\mathcal L \oplus \mathcal H.
$$

A vector field $X$ that
preserves $\mathcal H$:
$$
(g^X_t)_*\mathcal H=\mathcal H \quad \Longleftrightarrow
  \quad L_X\alpha=\lambda\alpha, \quad \text{for some smooth function }\lambda
$$
is called \emph{contact vector field}.
There is a distinguish contact vector field,
the \emph{Reeb vector field} $Z$, uniquely defined
by
\begin{equation}\label{REEB}
i_Z\alpha=1, \qquad i_Z d\alpha=0.
\end{equation}

For a given function $f$, one can associate the \emph{contact  vector field $Y_f$ with Hamiltonian} $f$:
\begin{equation*}
Y_f = f Z+\hat{Y}_f,
\label{iso}
\end{equation*}
where $\hat{Y}_f$ is a horizontal vector field defined by
$i_{\hat{Y}_f}d\alpha=-\big(df-Z(f)\alpha\big)$ (e.g., see \cite{LM}).
The mapping $f\mapsto Y_f$ is a bijection between smooth functions and contact vector fields on $M$.
The inverse mapping is simply the contraction: $f=i_{Y_f}\alpha$.
In particular, the Hamiltonian of the Reeb vector field is $f\equiv 1$.

Note that
$$
\mathcal L_{Y_f}\alpha=Z(f)\alpha,
$$
i.e., for the Reeb flow we have the inverse Noether theorem directly:
$Y_f$ is a Noether symmetry of the Reeb flow if and only if $f$ is the integral of the Reeb flow.

\subsection{Inverse Noether theorem}
If the elementary action is different from zero \eqref{kontaktni uslov}, a Noether symmetry $\zeta$ of the
Hamiltonian equation \eqref{1} is a contact vector field of \eqref{cm} with the Hamiltonian function $J=i_\zeta(pdq-Hdt)$.

We say that $H$ is a \emph{generic Hamiltonian}, if the condition \eqref{kontaktni uslov} hold for an open dense subset $U_H$ of  $\R\times T^*Q$.
From now one, we assume that $H$ is generic. Thus, we have:

\begin{thm}\label{treca}
To every integral $F$ of the Hamiltonian equation \eqref{1}, we can associate unique Noether symmetry $\zeta$ on $U_H$,  such that the corresponding Noether integral is equal to $F$: $i_\zeta(pdq-Hdt)=F$. The vector field $\zeta$ reads:
\begin{equation*}
\zeta=\tau(t,q,p)\frac{\partial
}{\partial t}+\sum_i{\xi^i(t,q,p)\frac{\partial}{\partial
q_i}+\eta^i(t,q,p)\frac{\partial}{\partial p_i}},
\end{equation*}
where the coefficient $\tau,\xi^i,\eta^i$
are given by
\begin{align}
\label{a1} \tau =& \rho^{-1}F -\rho^{-1}\sum_j \frac{\partial F}{\partial p_j}p_j, \\
\label{b1} \xi^i =& \rho^{-1}F\frac{\partial H}{\partial p_i}-\rho^{-1}\sum_j \frac{\partial F}{\partial p_j}p_j\frac{\partial H}{\partial p_i}+\frac{\partial F}{\partial p_i},\\
\label{c1} \eta^i =& -\rho^{-1}F\frac{\partial H}{\partial q^i}+\rho^{-1}\sum_j\frac{\partial F}{\partial p_j}p_j\frac{\partial H}{\partial q_i}-\frac{\partial F}{\partial q_i}, \qquad i=1,\dots,n.
\end{align}
In particular, if the invariant regular hypersurface $M_c=\{F(t,q,p)=c\}$ is a subset of $U_H$, the vector field $\zeta$ is well defined on the whole $M_c$.
\end{thm}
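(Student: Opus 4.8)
The plan is to start from the contact-geometric characterization established just before the theorem: on the open set $U_H$ where $\rho = p\,\partial H/\partial p - H \ne 0$, the Poincar\'e--Cartan form $\alpha = pdq - Hdt$ is a contact form, and the Reeb vector field is $Z_{\rho^{-1}}$ from \eqref{reb}. Since the bijection $f \mapsto Y_f$ between functions and contact vector fields has inverse $f = i_{Y_f}\alpha$, the natural candidate for $\zeta$ is the contact Hamiltonian vector field $Y_F$ with Hamiltonian $F$; this automatically gives $i_\zeta \alpha = F$ and $L_\zeta\alpha = Z_{\rho^{-1}}(F)\alpha$, so $\zeta$ is a Noether symmetry precisely when $Z_{\rho^{-1}}(F) = 0$. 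The first step is therefore to observe that $F$ being a first integral of \eqref{1} means $Z(F) = 0$ for the vector field $Z$ in \eqref{reb1}, and since $Z_{\rho^{-1}} = \rho^{-1} Z$ on $U_H$, we get $Z_{\rho^{-1}}(F) = 0$ there as well. Hence $L_\zeta\alpha = 0$ and $\zeta = Y_F$ is genuinely a Noether symmetry on $U_H$.

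Next I would extract the explicit coordinate formulas \eqref{a1}--\eqref{c1} by unwinding the definition $Y_F = F Z_{\rho^{-1}} + \hat Y_F$, where $\hat Y_F$ is the horizontal field determined by $i_{\hat Y_F} d\alpha = -(dF - Z_{\rho^{-1}}(F)\alpha) = -dF$ (using $Z_{\rho^{-1}}(F)=0$) together with $\alpha(\hat Y_F) = 0$. Concretely, writing $\hat Y_F = \hat\tau\,\partial_t + \sum_i \hat\xi^i \partial_{q_i} + \hat\eta^i \partial_{p_i}$ and using $d\alpha = dp\wedge dq - dH\wedge dt$, the equation $i_{\hat Y_F} d\alpha = -dF$ gives, componentwise, $\hat\eta^i = -\partial F/\partial q_i - (\text{terms in }\partial H/\partial q_i)$, $\hat\xi^i = \partial F/\partial p_i + (\text{terms in }\partial H/\partial p_i)$, and a relation fixing the $dt$-component; the horizontality condition $\alpha(\hat Y_F)=0$, i.e. $\sum_i p_i \hat\xi^i = \hat\tau H$, then pins down $\hat\tau$. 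Adding $F Z_{\rho^{-1}}$, whose components are $\rho^{-1}F\cdot(1, \partial H/\partial p_i, -\partial H/\partial q_i)$, and collecting terms should reproduce exactly \eqref{a1}--\eqref{c1}; I would present this as a direct verification rather than a derivation, perhaps by checking that the stated $\tau, \xi^i, \eta^i$ satisfy $i_\zeta\alpha = F$ and the three equations \eqref{u1}--\eqref{u3} of Proposition \ref{stav}.

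For uniqueness, the argument is that two Noether symmetries $\zeta_1, \zeta_2$ with the same Noether integral $F = i_{\zeta_1}\alpha = i_{\zeta_2}\alpha$ are both contact vector fields (by the remark that a Noether symmetry on $U_H$ is contact with Hamiltonian $i_\zeta\alpha$), hence both equal $Y_F$ by injectivity of $f \mapsto Y_f$; so $\zeta_1 = \zeta_2$ on $U_H$. Finally, for the statement about the regular level set $M_c = \{F = c\}$: the only place $\zeta$ could fail to be defined is where $\rho = 0$, and the formulas \eqref{a1}--\eqref{c1} all have the shape $\rho^{-1}(F - \sum_j p_j \partial F/\partial p_j)\cdot(\ldots) + (\text{smooth})$. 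On $M_c$ one has $F = c$, but that alone does not kill the singularity; the point is rather that $\zeta$ restricted to $M_c$ can be written using only the combination $\rho^{-1}(c - \sum_j p_j \partial F/\partial p_j)$, and one must argue this stays bounded — more precisely, that the vector field $\zeta$, viewed intrinsically as $Y_F$, extends smoothly across $\{\rho = 0\}\cap M_c$ because on a regular level set the contact Hamiltonian flow of $F$ is tangent to $M_c$ and the apparent pole is an artifact of the chosen normalization of the Reeb field. I expect this last point — making precise why the $\rho^{-1}$ singularity is removable along $M_c$ — to be the main obstacle, and I would handle it by re-deriving $\zeta|_{M_c}$ from the symplectic structure induced on $M_c$ (or on the characteristic bundle of $\alpha|_{M_c}$) rather than from the ambient contact structure, where no division by $\rho$ occurs.
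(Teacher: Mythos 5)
Your main argument is correct and is essentially the paper's own proof: take $\zeta=Y_F=FZ_{\rho^{-1}}+\hat Y_F$, note that $F$ being an integral gives $Z_{\rho^{-1}}(F)=\rho^{-1}Z(F)=0$ so that $L_\zeta\alpha=Z_{\rho^{-1}}(F)\alpha=0$, and then read off \eqref{a1}--\eqref{c1} from the two defining conditions $i_{\hat Y_F}(pdq-Hdt)=0$ and $i_{\hat Y_F}d\alpha=-dF$ by comparing the $dp_i$, $dq_i$ and $dt$ components (the $dt$ component being exactly the statement that $F$ is an integral). Your uniqueness argument via the bijection $f\mapsto Y_f$ and the observation that any Noether symmetry on $U_H$ is a contact vector field with Hamiltonian $i_\zeta\alpha$ is also the intended one.

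The one place you go astray is the final clause. You read it as asserting that $\zeta$ extends across the locus $\{\rho=0\}\cap M_c$ and you flag the removability of the $\rho^{-1}$ singularity as ``the main obstacle,'' proposing to re-derive $\zeta\vert_{M_c}$ from an induced structure on $M_c$. But the hypothesis of that clause is that $M_c$ \emph{is a subset of} $U_H$, i.e.\ $\rho\ne 0$ everywhere on $M_c$; there is no singularity to remove, and the conclusion is immediate from the fact that $\zeta$ is already defined on all of $U_H\supseteq M_c$. The point of the remark is only that the symmetry is globally defined on the invariant level set even when $U_H$ is a proper subset of $\R\times T^*Q$. The extension across $\{\rho=0\}$ that you set out to prove is not claimed and is not true in general (compare Example \ref{LH}, where smooth extension beyond $U_H$ is noted as a special feature of those particular integrals), so you should simply delete that part of your plan rather than attempt it.
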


\begin{proof} The required vector field $\zeta$ is
the contact vector field with the Hamiltonian $F$:
$$
\zeta=Y_F=F Z_{\rho^{-1}}+\hat{Y}_F,
$$
where the Reeb vector field $Z_{\rho^{-1}}$ is given by
\eqref{reb} and the the coefficient $a,b^i,c^i$ of the horizontal
vector field
$$
\hat{Y}_F=a(t,q,p)\frac{\partial
}{\partial t}+\sum_i b^i(t,q,p)\frac{\partial}{\partial
q_i}+c^i(t,q,p)\frac{\partial}{\partial p_i},
$$
are uniquely determined by the conditions:
\begin{eqnarray}
\label{horizontal}
&& i_{\hat{Y}_F}(pdq-Hdt)=\sum_i{b^ip_i}-Ha=0,\\
&& \label{horizontal2} i_{\hat{Y}_F}(dp\wedge dq-dH\wedge
dt)=-\big(dF-Z_{\rho^{-1}}(F)(pdq-Hdt)\big)=-dF.
\end{eqnarray}
Here we used the fact that  $F$ is the integral of the Hamiltonian equations \eqref{1}:
$$
Z_{\rho^{-1}}(F)=\rho^{-1}\big(\frac{\partial F}{\partial
t}+\sum_i\frac{\partial F}{\partial q_i}\frac{\partial H}{\partial
p_i}-\frac{\partial F}{\partial p_i}\frac{\partial H}{\partial
q_i}\big)=0.
$$

The left hand side of \eqref{horizontal2} is
$$
\sum_i(c^i dq_i- b^i dp_i) + a\big(dH-\frac{\partial H}{\partial t}dt\big)-\sum_j\big(c^j\frac{\partial H}{\partial p_j}+b^j\frac{\partial H}{\partial q_j}\big)dt.
$$

Therefore, by comparing the terms with $dp_i$, $dq_i$, and $dt$ in \eqref{horizontal2}, respectively, we obtain:
\begin{align}
\label{dp}  b^i-a \frac{\partial H}{\partial p_i}  = \frac{\partial F}{\partial p_i},   &\\
\label{dq}  c^i+a \frac{\partial H}{\partial q_i}   = -\frac{\partial F}{\partial q_i},&\qquad i=1,\dots,n,\\
\label{dt}  \sum_j\big(c^j\frac{\partial H}{\partial p_j}+b^j\frac{\partial H}{\partial q_j}\big)  =  \frac{\partial F}{\partial t}.&
\end{align}

By multiplying \eqref{dp} with $p_i$, and taking the sum of all $i$, from \eqref{horizontal}, we get
\begin{equation}
\label{a} a = -\rho^{-1}\sum_j\frac{\partial F}{\partial p_j}p_j.
\end{equation}
Next, by substitution of \eqref{a} into \eqref{dp} and \eqref{dq}, we get, respectively:
\begin{align}
\label{b} b^i = -\rho^{-1}\sum_j\frac{\partial F}{\partial p_j}p_j\frac{\partial H}{\partial p_i}+\frac{\partial F}{\partial p_i},&\\
\label{c} c^i = \rho^{-1}\sum_j\frac{\partial F}{\partial p_j}p_j\frac{\partial H}{\partial q_i}-\frac{\partial F}{\partial q_i},& \qquad  i=1,\dots,n.
\end{align}
Now, the equation \eqref{dt} is equivalent to the property that $F$ is the integral of the canonical equations \eqref{1}.
\end{proof}

The inverse Noether theorem for symmetries of $k$-th order Lagrangians is given recently in \cite{FS}. The set $U_H$ corresponds to the set $\mathcal W$ given there.

Note that in many well studied examples of natural mechanical systems, such us Kovalevskaya top (see \cite{D, SS}\footnote{There, one can find the formulation of the Noether theorem in quasi-coordinates within Lagrangian setting, such that transformations of time and coordinates $(t,q)$  depend on $(t,q,\dot q)$. Recently, this approach is extended to nonconservative systems in \cite{M} as well.}), the integrals are interpreted as Noether
integrals with the gauge terms. Here we have the following statement.

\begin{cor}\label{posledica}
Consider a natural mechanical system on $T^*Q$ with the Hamiltonian of the form $H=T+V$, where $T=\frac12\sum_{ij} g^{ij}(t,q)p_ip_j$ is the positive definite kinetic energy and $V=V(t,q)$ is the potential. If the potential is bounded from the above, $\max_{(t,q)\in \R\times Q} V(t,q)<v$, then we can take the same system with the potential replaced by $V-c$, where $c>v$. Then every integral $F$ of the Hamiltonian equations \eqref{1} is a Noether integral with the Noether symmetry $\zeta=Y_F$.
\end{cor}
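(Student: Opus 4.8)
The plan is to verify that the genericity hypothesis $(\ref{kontaktni uslov})$ is met globally once the potential has been shifted, and then simply invoke Theorem~\ref{treca}. The only quantity that must be controlled is the elementary action $\rho = i_Z(pdq - Hdt) = \sum_i p_i \frac{\partial H}{\partial p_i} - H$. For a natural Hamiltonian $H = T + V$ with $T$ quadratic in the momenta, Euler's relation for homogeneous functions gives $\sum_i p_i \frac{\partial T}{\partial p_i} = 2T$, and since $V$ is momentum-independent we get $\sum_i p_i \frac{\partial H}{\partial p_i} = 2T$. Hence
\[
\rho = 2T - (T + V) = T - V.
\]

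Now I would run the shift argument. Replacing $V$ by $V - c$ produces a new Hamiltonian $\tilde H = T + V - c$ whose Hamiltonian vector field is identical to that of $H$ (adding a constant changes nothing in $(\ref{1})$), so the set of integrals $F$ of the equations of motion is unchanged, and likewise the trajectories in the extended phase space are unchanged. For the shifted system the elementary action becomes $\tilde\rho = T - (V - c) = T - V + c$. Since $T \geq 0$ and, by hypothesis, $V(t,q) < v < c$ for all $(t,q)$, we have $-V + c > c - v > 0$, hence $\tilde\rho \geq c - v > 0$ everywhere on $\R\times T^*Q$. Thus $(\ref{kontaktni uslov})$ holds on all of $\R\times T^*Q$, i.e.\ $U_{\tilde H} = \R\times T^*Q$ and $\tilde H$ is generic in the strongest possible sense. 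Applying Theorem~\ref{treca} to $\tilde H$, every integral $F$ of the Hamiltonian equations is the Noether integral of the globally defined contact vector field $\zeta = Y_F$ given explicitly by $(\ref{a1})$--$(\ref{c1})$ with $\rho$ replaced by $\tilde\rho$; the formula $i_\zeta(pdq - \tilde H dt) = F$ and $L_\zeta(pdq - \tilde H dt) = dJ$ with $J = F$ follow verbatim.

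There is essentially no obstacle here: the statement is a corollary in the genuine sense, and the substantive work was already done in Theorem~\ref{treca}. The one point that deserves a sentence of care is the assertion that shifting the potential leaves ``every integral $F$ of the Hamiltonian equations $(\ref{1})$'' intact --- this is immediate because $X_{\tilde H} = X_H$ when $\tilde H - H$ is constant, but it should be stated, since the conclusion is phrased in terms of integrals of $(\ref{1})$ for the original $H$ while the contact structure used is that of $\tilde H$. One might also remark that positive definiteness of $T$ is not strictly needed for $\tilde\rho > 0$ (only $T \geq 0$ is used), but it is the natural setting and matches the hypothesis of Theorem~\ref{treca} on which the explicit formulas rely. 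I would therefore present the proof in three short movements: compute $\rho = T - V$ via Euler's identity; shift and observe $\tilde\rho \geq c - v > 0$ globally together with $X_{\tilde H} = X_H$; and quote Theorem~\ref{treca}.
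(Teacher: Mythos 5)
Your proposal is correct and follows essentially the same route as the paper's own (much terser) proof: compute the elementary action $\rho=T-V$, observe that after the shift $V\mapsto V-c$ it is strictly positive everywhere so that \eqref{cm} is strictly contact with $U_{\tilde H}=\R\times T^*Q$, and invoke Theorem~\ref{treca}. Your additional remarks --- that the shift does not change the Hamiltonian vector field and hence preserves the set of integrals, and that only $T\ge 0$ is really used --- are sensible elaborations of details the paper leaves implicit.
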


\begin{proof}
The elementary action $\rho=T-V$ is always greater then 0. Therefore \eqref{cm} is a strictly contact manifold.
\end{proof}

\begin{rem}
If there exist a closed 1-form $\beta$,
$$
\beta=a dt+ \sum_i b_i dq_i+c_i dp_i,
$$
such that
\begin{equation}\label{kontakni uslov2}
\rho+a+\sum_i  b_i\frac{\partial H}{\partial p_i}-c_i\frac{\partial H}{\partial q_i}\ne 0, \quad \text{for all}\quad (t,q,p),
\end{equation}
then the extended phase space $\R\times T^*Q$ is a strongly contact manifold with respect to the contact 1-form $pdq-Hdt+\beta$.
To every integral $F$ of the Hamiltonian equation \eqref{1}, we can associate unique weak Noether symmetry $\zeta$,
the contact Hamiltonian flow of the integral $F$ with repect to the contact form $pdq-Hdt+\beta$,
such that the corresponding Noether integral is equal to $F$: $i_\zeta(pdq-Hdt+\beta)=F$.
For example, the replacement of potential energy $V$ by $V-c$ in the Corollary \ref{posledica} corresponds to the closed form $\beta=a dt$, $a\equiv c$.

\end{rem}

\begin{prop}\label{jaka-slaba}
Assume that $\zeta$ is a weak Noether symmetry that satisfies \eqref{wns}. Then, the associated Noether symmetry $\tilde\zeta$ on $U_H$ with the same
conserved quantity is
$$
\tilde\zeta=\zeta+\rho^{-1}(\beta(\zeta)-f)Z,
$$
where $Z$ is given by \eqref{reb1}.
\end{prop}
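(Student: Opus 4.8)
The plan is to verify directly that the vector field $\tilde\zeta=\zeta+\rho^{-1}(\beta(\zeta)-f)Z$ is a (strict) Noether symmetry carrying the same Noether integral as $\zeta$; the uniqueness of the Noether symmetry with a prescribed integral is then already supplied by Theorem \ref{treca}. So the whole argument reduces to two short computations: $L_{\tilde\zeta}\alpha=0$ and $i_{\tilde\zeta}\alpha=J$, where $\alpha=pdq-Hdt$ and $J=i_\zeta(\alpha+\beta)-f$ is the Noether integral from Theorem \ref{prva}.

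First I would record the two properties of the vector field $Z$ of \eqref{reb1} that are needed. Since $Z$ is a section of the characteristic line bundle of $\alpha$, we have $i_Z d\alpha=0$; and by \eqref{kontaktni uslov} its contraction with the Poincar\'e--Cartan form is $\alpha(Z)=\rho=\sum_i p_i\,\partial H/\partial p_i-H$, which is nowhere zero on $U_H$, so $\tilde\zeta$ is well defined there. Next, applying Cartan's formula $L_\zeta=i_\zeta d+d\,i_\zeta$ to the weak Noether condition \eqref{wns} and using that $\beta$ is closed gives $i_\zeta d\alpha=-d\big(i_\zeta(\alpha+\beta)-f\big)=-dJ$, exactly as in the proof of Theorem \ref{prva} (cf. the proof of Theorem \ref{druga}).

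Then the two checks are one line each. For the conserved quantity, $i_{\tilde\zeta}\alpha=i_\zeta\alpha+\rho^{-1}(\beta(\zeta)-f)\,\alpha(Z)=i_\zeta\alpha+\beta(\zeta)-f=J$, so $\tilde\zeta$ indeed has the same Noether integral. For the symmetry property, $L_{\tilde\zeta}\alpha=i_{\tilde\zeta}d\alpha+d(i_{\tilde\zeta}\alpha)=\big(i_\zeta d\alpha+\rho^{-1}(\beta(\zeta)-f)\,i_Z d\alpha\big)+dJ=-dJ+0+dJ=0$, using $i_Z d\alpha=0$. Hence $L_{\tilde\zeta}\alpha=0$, i.e. $\tilde\zeta$ is a Noether symmetry, and by Theorem \ref{treca} it coincides with the contact Hamiltonian vector field $Y_J$ of the integral $F=J$.

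There is no real obstacle here: the content is simply that adding the multiple $\rho^{-1}(\beta(\zeta)-f)Z$ of the characteristic vector field leaves $i_\bullet d\alpha$ unchanged (because $i_Zd\alpha=0$) while shifting $i_\bullet\alpha$ by $\rho^{-1}(\beta(\zeta)-f)\cdot\rho=\beta(\zeta)-f$, which is precisely the gap between $i_\zeta\alpha$ and the true Noether integral $J$. The only point needing attention is that the correction involves $\rho^{-1}$, so the construction is confined to the generic set $U_H$ where $\rho\neq 0$; this is why the statement is phrased on $U_H$.
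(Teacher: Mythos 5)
Your proof is correct and rests on exactly the same ingredients as the paper's ($i_Zd\alpha=0$, $\alpha(Z)=\rho$, and $i_\zeta d\alpha=-dJ$ extracted from \eqref{wns} via Cartan's formula); the only difference is direction of travel: the paper takes the Noether symmetry $\tilde\zeta$ with integral $J$ supplied by Theorem \ref{treca}, observes that $\zeta-\tilde\zeta\in\ker(dp\wedge dq-dH\wedge dt)$ and hence equals $\nu Z$, and then solves $i_{\tilde\zeta}(pdq-Hdt)=J$ for $\nu$, whereas you verify the stated formula directly and invoke Theorem \ref{treca} only for uniqueness. Both arguments are equally valid and essentially the same computation.
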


\begin{proof} According to Theorem \ref{treca}, on $U_H$ we have the Noether symmetry $\tilde\zeta$, such that
\begin{equation}\label{ji}
J=i_{\zeta}(pdq-Hdt+\beta)-f=i_{\tilde\zeta}(pdq-Hdt).
\end{equation}

From \eqref{izvod2}, \eqref{ji} we have
$$
dJ=-i_{\zeta}(dp\wedge dq-dH\wedge dt)=-i_{\tilde\zeta}(dp\wedge dq-dH\wedge dt),
$$
and, therefore,
$$
\zeta-\tilde\zeta\in\ker(dp\wedge dq-dH\wedge dt).
$$

Thus,
$$
\tilde\zeta=\zeta+\nu Z,
$$
for some function $\nu(t,q,p)$. Finally, by substitution of the
above relation to \eqref{ji}, we get
$\nu=\rho^{-1}(\beta(\zeta)-f)$.
\end{proof}

\begin{exm}\label{LH}\emph{Linear integrals and energy.}
Assume that
$$
F=\sum_j\xi^j(q,t)p_j
$$
is the first integral. Then
$$
\sum_j\frac{\partial F}{\partial p_j}p_j=F,
$$
and Theorem \ref{treca} gives the well known expression for the Noether symmetry
\begin{equation*}
\zeta=\sum_i\xi^i\frac{\partial}{\partial q_i}-\sum_{ij}\frac{\partial\xi^j}{\partial q_i}p_j\frac{\partial}{\partial p_i}.
\end{equation*}

Next, if the Hamiltonian $H$ does not depend on time it is the integral of the system. The Noether symmetry from Theorem \ref{treca}, for the integral
$F=-H$,
takes the expected form:
$$
\zeta=\frac{\partial
}{\partial t}.
$$

Note that the above vector fields have smooth extensions from $U_H$ to $\R\times T^*Q$.
\end{exm}

\begin{exm}\emph{Quadratic integrals of the geodesic flows.}
Consider the geodesic flow with the Hamiltonian function $H=\frac12\sum_{ij}g^{ij}(q)p_ip_j$. We have $\rho=H\ne 0$ outside the zero section of $T^*Q$.
Assume that we have a quadratic first integral
$$
F=\frac12\sum_{ij} a^{ij}(q)p_ip_j.
$$
Then $$
\sum_j\frac{\partial F}{\partial p_j}p_j=2F
$$
and outside the zero section of $T^*Q$ we have the Noether symmetry
\begin{equation*}
\zeta=-\frac{F}{H}\frac{\partial
}{\partial t}+\sum_i\big(-\frac{F}{H}\frac{\partial H}{\partial p_i}+\frac{\partial F}{\partial p_i}\big)\frac{\partial}{\partial
q_i}+\sum_i\big(\frac{F}{H} \frac{\partial H}{\partial q_i}-\frac{\partial F}{\partial q_i}\big)\frac{\partial}{\partial p_i}.
\end{equation*}
\end{exm}

\begin{exm}\label{Kepler}\emph{Kepler problem.}
The Noether symmetries
associated to the Runge--Lenz vector in the Kepler problem are one of the basic examples for the inverse Noether theorem, see \cite{CMN, SC1, GS2, J}.
Consider a planar motion of a unit mass particle in the central gravitational force field. We have $Q=\R^2\setminus\{(0,0)\}$, and the Hamiltonian is
$$
H=T+V=\frac12(p_1^2+p_2^2)-\frac{\mu}{r}, \qquad r=\sqrt{q_1^2+q_2^2}, \qquad \mu>0
$$
The system is superintegrable with well known integrals: the
Hamiltonian $H$, the angular momentum $L=q_1p_2-q_2p_1$, and the
Runge--Lenz vector
$$
A=(A_1,A_2)=\big(q_1p_2^2-q_2p_1p_2-\mu\frac{q_1}{r},q_2p_1^2-q_1p_1p_2-\mu\frac{q_2}{r}\big).
$$

The elementary action $\rho=T-V$ is greater then 0 and  the Nether symmetries for the integrals $H$ and $L$ are already described in Example \ref{LH}.
We have
$$
\frac{\partial A_k}{\partial p_j}p_j=2A_k+2\mu\frac{q_k}{r}, \qquad k=1,2.
$$

Therefore, the Noether symmetries of integrals $A_1$ and $A_2$ are
\begin{equation*}
\zeta_k=\tau_k(t,q,p)\frac{\partial
}{\partial t}+\xi^i_k(t,q,p)\frac{\partial}{\partial
q^i}+\eta^i_k(t,q,p)\frac{\partial}{\partial p_i},\qquad k=1,2,
\end{equation*}
where the coefficient $\tau_k,\xi^i_k,\eta^i_k$, $k=1,2$,
are given by
\begin{align*}
\tau_1 &= -\rho^{-1}\big(q_1p_2^2-q_2p_1p_2+\mu\frac{q_1}{r}\big), \\
\xi^1_1 &= {\tau_1}p_1-q_2p_2,\\
\xi^2_1 &= {\tau_1}p_2+2q_1p_2-q_2p_1,\\
\eta^1_1 &= -\tau_1\mu\frac{q_1}{r^3}-p_2^2-\mu\frac{q^2_1}{r^3}+\mu\frac{1}{r},\\
\eta^2_1 &= -\tau_1\mu\frac{q_2}{r^3}+p_1p_2-\mu\frac{q_1q_2}{r^3},
\end{align*}
\begin{align*}
\tau_2 &= -\rho^{-1}\big(q_2p_1^2-q_1p_1p_2+\mu\frac{q_2}{r}\big), \\
\xi^1_2 &= {\tau_2}p_1+2q_2p_1-q_1p_2,\\
\xi^2_2 &= {\tau_2}p_2-q_1p_1,\\
\eta^1_2 &= -\tau_2\mu\frac{q_1}{r^3}+p_1p_2-\mu\frac{q_1q_2}{r^3},\\
\eta^2_2 &= -\tau_2\mu\frac{q_2}{r^3}-p_1^2-\mu\frac{q_2^2}{r^3}+\mu\frac{1}{r}.
\end{align*}
\end{exm}

\section{Integrability by means of Noether symmetries}

For $\rho\ne 0$, the Noether symmetries are contact vector fields
and we can use the notion of complete integrability of contact
systems (see \cite{KT, Jo3, Jo2}) to obtain a variant of the
complete (non-commutative) integrability in terms of Noether
symmetries of time-dependent Hamiltonian systems. It appears,
however, that we do not need the contact assumption $\rho\ne 0$.

\begin{thm}\label{1.i}
Assume that Hamiltonian equations \eqref{1} have $m$ independent
Noether symmetries $\zeta_1,\dots,\zeta_m$, independent of the
vector field \eqref{reb1}, such that first $r$ of then commute
with all symmetries,
$$
[\zeta_i,\zeta_j]=0, \quad i=1,\dots,r, \quad j=1,\dots,m,
$$
and $2n=m+r$. Then

{\rm (i)} The Noether integrals $J_i=i_{\zeta_i}(pdq-Hdt)$ are
independent and the equations \eqref{1} are locally solvable by
quadratures.

{\rm (ii)} If the vector fields $Z,\zeta_1,\dots,\zeta_r$ are
complete, then a connected regular component of the invariant
variety in the extended phase space
\begin{equation}\label{inv povrs}
M_c=\{(t,q,p)\in \R \times T^*Q \,\vert\, J_i=c_i, \,  i=1,\dots,m\}
\end{equation}
is diffeomorphic to a cylinder $\mathbb T^l \times \R^{r+1-l}$,
for some $l$, $0\le l \le r$, where $\mathbb T^l$ is a
$l$--dimensional torus. There exist coordinates
$\varphi_1,\dots,\varphi_l,x_1,\dots,x_{r+1-l}$ of $\mathbb T^l
\times \R^{r+1-l}$, which linearise the equation in the extended
phase space:
\begin{eqnarray*}
&&\dot\varphi_i=\omega_i=const, \qquad i=1,\dots,l,\\
&&\dot x_j=a_j=const, \qquad  j=1,\dots,r+1-l.
\end{eqnarray*}
\end{thm}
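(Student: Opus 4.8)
The plan is to reduce the statement to a known non-commutative integrability theorem by treating $Z$ and the symmetries $\zeta_1,\dots,\zeta_m$ as an abstract symmetry algebra of the vector field $Z$ on the $(2n+1)$-dimensional manifold $M=\R\times T^*Q$ equipped with the nondegenerate $1$-form $\alpha=pdq-Hdt$. By Theorem~\ref{druga}(iii), each $\zeta_i$ satisfies $[Z,\zeta_i]\in\Gamma(\mathcal L)$, and along an invariant component we may normalize so that $[Z,\zeta_i]=0$ (using $dt(Z)=1$ and $dt(\zeta_i)=\tau_i$, replace $\zeta_i$ by $\zeta_i-\tau_i Z$ if needed — but one must check this preserves the commutation hypotheses among the first $r$ fields, so it is cleaner to keep $Z$ as an extra generator). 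Thus $\g=\Span\{Z,\zeta_1,\dots,\zeta_m\}$ spans, at a regular point, an $(m+1)=(2n+1)$-dimensional space, i.e.\ $Z,\zeta_1,\dots,\zeta_m$ frame $TM$. The abelian part $Z,\zeta_1,\dots,\zeta_r$ has dimension $r+1$.

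First I would verify independence of the Noether integrals $J_i=i_{\zeta_i}\alpha$ on the regular locus. The key identity is \eqref{izvod2}: $i_{\zeta_i}d\alpha=-dJ_i$. Since $d\alpha$ has rank $2n$ with kernel spanned by $Z$, the map $\zeta\mapsto i_\zeta d\alpha$ has kernel exactly $\Gamma(\mathcal L)$; hence $dJ_1,\dots,dJ_m$ are independent wherever $\zeta_1,\dots,\zeta_m$ are independent modulo $Z$, which is the hypothesis. Moreover $dJ_i(Z)=0$ (they are integrals) and $dJ_i(\zeta_j)=d\alpha(\zeta_j,\zeta_i)$; for $i\le r$ this last quantity vanishes because $L_{\zeta_i}\alpha=0$ and $[\zeta_i,\zeta_j]=0$ give $d\alpha(\zeta_i,\zeta_j)=\zeta_i(\alpha(\zeta_j))-\zeta_j(\alpha(\zeta_i))-\alpha([\zeta_i,\zeta_j])$, and one shows each term is controlled. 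So the level set $M_c$ is coisotropic-type: it is the joint level set of $m=2n-r$ functions whose differentials annihilate the $(r+1)$-dimensional integrable distribution $\mathcal D=\Span\{Z,\zeta_1,\dots,\zeta_r\}$ (integrability: $[Z,\zeta_i]=0$ for $i\le r$ after the normalization, and $[\zeta_i,\zeta_j]=0$). Therefore $M_c$ is an $(r+1)$-dimensional submanifold tangent to $\mathcal D$, and the flows of $Z,\zeta_1,\dots,\zeta_r$ act on it.

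For part (i), solvability by quadratures follows the standard Lie-theoretic argument: one has an $(r+1)$-dimensional solvable (here abelian) algebra of symmetries of $Z$ acting on the $(r+1)$-dimensional level manifold, together with the remaining symmetries $\zeta_{r+1},\dots,\zeta_m$ to cut down the level set; integrating $Z$ amounts to successive quadratures along the orbit, exactly as in Lie's integrability-by-quadratures theorem / the Bogoyavlenskij non-commutative setup. For part (ii), completeness of $Z,\zeta_1,\dots,\zeta_r$ makes the abelian algebra $\R^{r+1}$ act on the connected component of $M_c$; the action is locally free (the fields are independent there) and its orbits are open, hence the action is transitive, so the component is $\R^{r+1}/\Gamma$ for a discrete stabilizer $\Gamma\subset\R^{r+1}$, i.e.\ $\T^l\times\R^{r+1-l}$. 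In the resulting linear coordinates $\varphi_i,x_j$ the generators $Z,\zeta_1,\dots,\zeta_r$ become constant-coefficient vector fields; writing $Z=\sum a_j\partial_{x_j}+\sum\omega_i\partial_{\varphi_i}$ in this basis (constants because $Z$ commutes with all of them and the orbit is homogeneous) gives the asserted linearization $\dot\varphi_i=\omega_i$, $\dot x_j=a_j$.

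The main obstacle I anticipate is the bookkeeping around the extra field $Z$ and the normalization $\zeta_i\mapsto\zeta_i-\tau_i Z$: one must ensure that after passing to an invariant component one genuinely has an \emph{abelian} algebra of dimension $r+1$ acting, that the non-abelian generators $\zeta_{r+1},\dots,\zeta_m$ still define $M_c$ as a clean intersection, and that the independence of the $J_i$ is not spoiled by the replacement. Handling this carefully — rather than the final homogeneous-space argument, which is routine once the algebra structure is in place — is where the real work lies; it is essentially the adaptation of the cosymplectic/contact non-commutative integrability theorems of \cite{KT, Jo2, Jo3} to the present framing in terms of the nondegenerate $1$-form $\alpha$, and the nontrivial point is precisely that, as remarked before the theorem, the contact condition $\rho\neq0$ turns out to be unnecessary.
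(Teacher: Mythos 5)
Your proposal is correct and follows essentially the same route as the paper: the normalization $\tilde\zeta_i=\zeta_i-\tau_i Z$ with $\tau_i=dt(\zeta_i)$, independence of the $J_i$ via $i_{\zeta_i}d\alpha=-dJ_i$ together with independence of the $\zeta_i$ modulo $\ker d\alpha$, tangency of the commuting fields to $M_c$ followed by the Lie theorem for the quadratures, and the standard Arnold--Liouville homogeneous-space argument for part (ii). The one step you leave hedged --- whether replacing $\zeta_i$ by $\zeta_i-\tau_i Z$ preserves commutativity among the first $r$ fields --- is settled in the paper by the direct computation $[\tilde\zeta_i,\tilde\zeta_j]=(\zeta_j(\tau_i)-\zeta_i(\tau_j))Z=0$, which follows from evaluating $[\zeta_i,\zeta_j]=0$ on $dt$.
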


Note that the above statement slightly differs from the
Arnold--Liouville and non-commutative integrability of
time-dependent Hamiltonian systems studied in \cite{GMS, GS3}.

\begin{proof}
(i) According to the item (iii) of Theorem \ref{prva}, we have
$$
\left[Z,\zeta_i\right]=\nu_i Z,
$$
for some smooth functions $\nu_i=\nu_i(t,q,p)$, $i=1,\dots,m$. We
need to find functions $f_i$, such that the vector fields
$$
Z, \qquad \tilde\zeta_i=\zeta_i-f_i Z
$$
pairwise commute:
\begin{equation}\label{komutiranje}
[Z,\tilde\zeta_i]=0, \qquad [\tilde\zeta_i,\tilde\zeta_j]=0,
\qquad i,j=1,\dots,r.
\end{equation}

Let $\tau_i=dt(\zeta_i)$. Since $dt(Z)=1$, we have
$$
\nu_i=[Z,\zeta_i](dt)=Z(dt(\zeta_i))-\zeta_i(dt(Z))=Z(\tau_i)=\frac{\partial
\tau_i}{\partial t}+\frac{\partial \tau_i}{\partial
q}\frac{\partial H}{\partial p}-\frac{\partial \tau_i}{\partial
p}\frac{\partial H}{\partial q}.
$$
Therefore, for $f_i=\tau_i=dt(\zeta_i)$, we have
$$
[Z,\tilde\zeta_i]=[Z,\zeta_i-\tau_i Z]=Z(\tau_i)Z-Z(\tau_i)Z=0.
$$

Further,
\begin{align*}
[\tilde\zeta_i,\tilde\zeta_j] &=    [\zeta_i-\tau_i Z,\zeta_j-\tau_j Z]\\
&=  [\zeta_i,\zeta_j]-[\zeta_i,\tau_j Z]-[\tau_i Z,\zeta_j]+[\tau_i Z,\tau_j Z]\\
&=  0-(\zeta_i(\tau_j)Z-\tau_j Z(\tau_i) Z)-(\tau_i Z(\tau_j) Z-\zeta_j(\tau_i)Z)\\
&\qquad +(\tau_i Z(\tau_j)-\tau_j Z(\tau_i)Z)\\
&= (\zeta_j(\tau_i)-\zeta_i(\tau_j))Z.
\end{align*}

On the other hand, since $[\zeta_i,\zeta_j]=0$, we have
$$
[\zeta_i,\zeta_j](dt)=\zeta_i(\tau_j)-\zeta_j(\tau_i)=0,
$$
which proves \eqref{komutiranje}.

Now, consider the invariant variety \eqref{inv povrs}. At a
generic point $(t,q,p)$, the differentials of the integrals $J_i$
are independent. Indeed, in our case \eqref{izvod2} reads
\begin{equation}\label{diferencijali}
dJ_i=-i_{\zeta_i} d(pdq-Hdt), \qquad i=1,\dots,m.
\end{equation}

Therefore, if there exist real parameters $a_1,\dots,a_m$,
$a_1^2+\dots+a_m^2\ne 0$, such that
$$
a_1 dJ_1+\dots+a_m dJ_m=0,
$$
then
$$
i_{a_1\zeta_1+\dots+a_m\zeta_m}\in \ker d(pdq-Hdt),
$$
which implies that $Z,\zeta_1,\dots,\zeta_m$ are dependent at
$(t,q,p)$. Thus, the integrals $J_1,\dots,J_m$ are independent and
the regular invariant levels sets \eqref{inv povrs} are
$(r+1)$--dimensional submanifolds.

Since $J_i$ are the integrals of the equations \eqref{1}, the
vector field $Z$ is tangent to $M_c$. Further, we have
\begin{align}
0 & =i_{[\zeta_i,\zeta_j]}(pdq-Hdt)\nonumber\\
&=L_{\zeta_i}\circ i_{\zeta_j}(pdq-Hdt)-i_{\zeta_j}\circ L_{\zeta_i}(pdq-Hdt)\label{izvodi}\\
&={\zeta_i} (J_j), \qquad \qquad i=1,\dots,r,\qquad
j=1,\dots,m.\nonumber
\end{align}

Thus, the commuting vector fields $\tilde\zeta_1=\zeta_1-\tau_1
Z,\dots,\tilde\zeta_r=\zeta_r-\tau_r Z$ are also tangent to $M_c$
and, by the Lie theorem \cite{Koz}, the trajectories of \eqref{1} can be found locally by
quadratures.

\medskip

(ii) The proof of item (ii) is the same as the corresponding
statement in the Arnold--Liouville theorem (see \cite{Ar}).
\end{proof}

If the Hamiltonian and Noether symmetries are periodic with
respect to the time translation $(t,q,p)\mapsto (t+1,q,p)$,  we
can consider $S^1\times T^*Q (t,q,p)$, $S^1=\R/\mathbb Z$, as an
extended phase space.

With the above notation we have

\begin{cor}\label{2.i}
The regular compact connected components of $M_c/\mathbb Z$ are
$(r+1)$-dimensional tori with quasi-periodic dynamics
$$
\dot\varphi_i=\omega_i=const, \qquad i=1,\dots,r+1.
$$
\end{cor}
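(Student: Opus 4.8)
The plan is to deduce the corollary from Theorem \ref{1.i}(ii) by the standard quotient-by-$\mathbb{Z}$ argument. First I would note that the hypothesis of periodicity in $t$ means that the time translation $\sigma\colon (t,q,p)\mapsto (t+1,q,p)$ is a well-defined $\mathbb{Z}$-action on $\R\times T^*Q$ that preserves the Poincar\'e--Cartan form $pdq-Hdt$ (since $H$ is $1$-periodic in $t$) and commutes with the Reeb field $Z$ of \eqref{reb1} as well as with all the Noether symmetries $\zeta_1,\dots,\zeta_m$ (because these are $1$-periodic in $t$ by assumption). Consequently $\sigma$ preserves each Noether integral $J_i=i_{\zeta_i}(pdq-Hdt)$, so it acts on the invariant variety $M_c$ of \eqref{inv povrs}, and the quotient $M_c/\mathbb{Z}$ is a smooth manifold on which $Z$, $\tilde\zeta_1,\dots,\tilde\zeta_r$ and the linearising coordinates of Theorem \ref{1.i}(ii) descend.

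Next I would invoke Theorem \ref{1.i}(ii): a connected regular component of $M_c$ is diffeomorphic to $\T^l\times\R^{r+1-l}$ with coordinates $\varphi_1,\dots,\varphi_l,x_1,\dots,x_{r+1-l}$ in which the dynamics of the extended-phase-space equation $\dot x=Z$ reads $\dot\varphi_i=\omega_i$, $\dot x_j=a_j$, all constant. The coordinate $t$ is one of the affine coordinates on the Euclidean factor (indeed $Z(t)=dt(Z)=1$, so $t$ advances at unit speed and is non-periodic along the flow), hence after relabelling we may take $x_{r+1-l}=t$ with $a_{r+1-l}=1$. The $\mathbb{Z}$-action $\sigma$ is translation by $1$ in this last coordinate and acts trivially on the remaining coordinates; therefore on a compact connected component the quotient $M_c/\mathbb{Z}$ replaces the $\R$-factor carrying $t$ by a circle $S^1=\R/\mathbb{Z}$. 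For $M_c/\mathbb{Z}$ to be compact we need the remaining Euclidean factor $\R^{r-l}$ to be absent, i.e. $l=r$; then $M_c/\mathbb{Z}\cong\T^r\times S^1=\T^{r+1}$, an $(r+1)$-dimensional torus. In the induced angle coordinates $\varphi_1,\dots,\varphi_r,\varphi_{r+1}$ (with $\varphi_{r+1}$ the class of $t$ and $\omega_{r+1}=1$) the flow is the quasi-periodic motion $\dot\varphi_i=\omega_i=\mathrm{const}$, $i=1,\dots,r+1$, which is the assertion.

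The main obstacle I anticipate is purely a matter of bookkeeping rather than a genuine difficulty: one must check that the diffeomorphism $M_c\cong\T^l\times\R^{r+1-l}$ from Theorem \ref{1.i}(ii) can be chosen $\sigma$-equivariantly, so that the quotient really has the claimed product form. This follows because that diffeomorphism is built (as in the Arnold--Liouville construction) from the joint flows of the commuting complete vector fields $Z,\tilde\zeta_1,\dots,\tilde\zeta_r$, all of which commute with $\sigma$; hence $\sigma$ is conjugated to a translation in the period lattice, and on the compact component it must be a primitive translation in precisely the direction of the $t$-coordinate, collapsing the last line to a circle and forcing $l=r$. Everything else — smoothness of the quotient, freeness and properness of the $\mathbb{Z}$-action (immediate since $t$ strictly increases), and constancy of the frequencies — is inherited directly from Theorem \ref{1.i}.
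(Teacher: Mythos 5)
Your argument is correct and follows exactly the route the paper intends: the paper gives no separate proof of Corollary \ref{2.i}, treating it as the immediate consequence of Theorem \ref{1.i}(ii) once the extended phase space is replaced by $S^1\times T^*Q$, which is precisely your quotient-by-$\mathbb{Z}$ argument. The only (harmless) over-statement is your claim that the deck transformation $\sigma$ acts trivially on the torus coordinates --- it is a translation by some vector $v$ with $dt(v)=1$ that may have components along the compact directions --- but the quotient $\R^{r+1}/(\Lambda+\mathbb{Z}v)$ is an $(r+1)$-torus with linear flow in any case, so the conclusion stands.
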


\begin{rem}
Assume that the vector fields $\zeta_i$ are weak Noether
symmetries
\begin{equation}\label{wns2}
L_{\zeta_i}(pdq-Hdt+\beta)=df_i, \qquad i=1,\dots,m.
\end{equation}
Then \eqref{komutiranje} still holds, the vector field $Z$ is
tangent to $M_c$, and \eqref{izvod2} implies \eqref{diferencijali}, where the Noether
integrals are $J_i=i_{\zeta_i}(pdq-Hdt+\beta)-f_i$, $i=1,\dots,m$.

Now,
\begin{align*}
0 & =i_{[\zeta_i,\zeta_j]}(pdq-Hdt+\beta)\nonumber\\
&=L_{\zeta_i}\circ i_{\zeta_j}(pdq-Hdt+\beta)-i_{\zeta_j}\circ L_{\zeta_i}(pdq-Hdt+\beta)\label{izvodi*}\\
&={\zeta_i}(J_j)+\zeta_i(f_j)-\zeta_j(f_i)
\end{align*}
and
\begin{align*}
0 & =i_{[\zeta_i,\zeta_j]}d(pdq-Hdt)\nonumber\\
&=L_{\zeta_i}\circ i_{\zeta_j}d(pdq-Hdt)-i_{\zeta_j}\circ L_{\zeta_i}d(pdq-Hdt)\\
&=L_{\zeta_i}(dJ_j)-i_{\zeta_j}(i_{\zeta_i}d^2(pdq-Hdt)-d^2 J_i)\\
&=d(\zeta_i(J_j)),\qquad \qquad i=1,\dots,r,\qquad
j=1,\dots,m.\nonumber
\end{align*}

Thus, we obtain $\zeta_i(J_j)=c_{ij}=const$. However, in order to
have $c_{ij}=0$, the additional assumptions
$\zeta_i(f_j)=\zeta_j(f_i)$, $i=1,\dots,r$, $j=1,\dots,m$ should
be added in Theorem \ref{druga}.
\end{rem}

\subsection*{Acknowledgments}
I am grateful to Borislav Gaji\'c for useful suggestions and comments.
The research was supported by the Serbian Ministry of
Science Project 174020, Geometry and Topology of Manifolds,
Classical Mechanics and Integrable Dynamical Systems.


\begin{thebibliography}{84}

\bibitem{Al} {C. Albert} \emph{Le thoreme de reduction de Marsden-Weinstein en geometrie cosymplectique et de contact},
J. Geom. Phys. {\bf 6} (1989) 628--649.

\bibitem{Ar}
{\srrm V. \,I. Arnol\m d,} {\srit Matematicheskie metody
klassichesko{\ji} mehaniki}{\srrm, Moskva, Nauka 1974} (Russian).
English translation:

V.\,I. Arnol'd, \emph{Mathematical methods of classical
mechanics}, Springer-Verlag, 1978.

\bibitem{BEADM}
M. Barbero-Linan, A. Echeverria-Enriquez, D. M. de Diego, M. C. Munoz-Lecanda, N. Roman-Roy,
\emph{Unified formalism for nonautonomous mechanical systems},
J. Math. Phys. {\bf 49} (2008), no. 6, 062902, 14 pp, arXiv:0803.4085


\bibitem{SC1} F. Cantrjin, W. Sarlet, \emph{Generalizations of Noether's
theorem in classical mechanics}, SIAM Review {\bf 23} (1981)
467--494.

\bibitem{SC2} F. Cantrjin, W. Sarlet, \emph{Symmetries and Conservation Laws for Generalized Hamiltonian Systems},
Int. J. Theor. Phys. {\bf 20} (1981) 645--670.

\bibitem{CLL} F. Cantrjin, M de Leon, E. A. Lacomba, \emph{Gradient vector fields on consymplectic manifolds}, J. Phys. A: Math. Gen.
{\bf 25} (1992) 175--188.

\bibitem{MNY} B. Cappelletti-Montano, A. de Nikola, I. Yudin, \emph{A survey on cosymplectic geometry},
Rev. Math. Phys. {\bf 25} (2013), no. 10, 1343002, 55pp, arXiv:1305.3704

\bibitem{CMN} J. F. Carinena, E. Martinez, J. Fernandez-Nunez, \emph{Noether's theorem in time-dependent Lagrangian mechnics}, Rep. Math. Phys. {\bf 31} (1992) n.2,    189--203.

\bibitem{MR} H. Cendra, J.\, E. Marsden, S. Pekarsky, T.\, S. Ratiu,
\emph{Variational principles for Lie-Poisson and
Hamilton--Poincar\'e equations}, Mosc. Math. J. {\bf 3} (2003),
no. 3, 833--867.

\bibitem{Cr} M. Crampin,
\emph{Constants of the motion in Lagrangian mechanics},
Int. J. Theor. Phys. {\bf 16} (1977), No. 10, 741--754.

\bibitem{CPT}
M. Crampin, G. E. Prince, G. Thompson, \emph{A geometric version of the Helmholtz conditions in time dependent Lagrangian dynamics}, J. Phys.
A: Math. Gen. {\bf 17} (1984) 1437--1447.

\bibitem{D} Dj. S. Djuki\'c, \emph{Conservation laws in classical mechanics for quasi-coordinates}.
Arch. Ration. Mech. Anal. {\bf 56} (1974), 79--98.

\bibitem{DV} Dj. S. Djuki\'c, B. D. Vujanovi\'c, \emph{Noether's Theory in the Classical Nonconservative Mechanics}, Acta Mech.
{\bf 23}(1975) 17--27.

\bibitem{FS}  E. Fiorani,  A. Spiro, \emph{Lie algebras of conservation laws of variational ordinary differential equations}, J. Geom. Phys.
{\bf 88} (2015), 56--75, arXiv:1411.6097

\bibitem{GMS}{G. Giachetta, L. Mangiarotti, G. A. Sardanashvily},
\emph{Action-angle coordinates for time-dependent completely
integrable Hamiltonian systems}, J. Phys. A: Math. Gen. {\bf 35}
(2002) L439---L445, arXiv:math/0204151

\bibitem{Gr} P. A. Griffits, \emph{Exterior differential systems and the
calculus of variations}, Progress in Mathematics, 25.
Birkh\"auser, Boston, Mass., 1983.



\bibitem{Jo3} B. Jovanovi\' c, \emph{Noncommutative integrability and action angle variables in contact
geometry}, Journal of Symplectic Geometry,  \textbf{10} (2012),
535--562, arXiv:1103.3611 [math.SG].

\bibitem{Jo} B. Jovanovi\'c,
\emph{On the principle of stationary isoenergetic action}.
Publ. Inst. Math. (Beograd) (N.S.) {\bf 91(105)} (2012), 63--81, arXiv:1207.0352

\bibitem{Jo2} B. Jovanovi\'c, V. Jovanovi\'c,\emph{Contact flows and integrable systems}, J. Geom. Phys. {\bf 87} (2015) 217--232, arXiv:1212.2918

\bibitem{KT} B. Khesin, S. Tabachnikov, \emph{Contact complete
integrability}, Regular and Chaotic Dynamics, \textbf{15} (2010)
 504–-520, arXiv:0910.0375 [math.SG].

\bibitem{Ko} {\srrm V. V. Kozlov,} {\srit Variacionnoe ischislenie v celom i klassicheskaya mehanika}{\srrm,
UMN {\bf 40} (1985), v. 2(242), 33--60} (Russian). English translation:

V. V. Kozlov, \emph{Calculus of variations in the large and classical mechanics}, Russ. Math. Surv. {\bf 40} (1985), n. 2, 37--71.


\bibitem{Koz}
V. V. Kozlov, \emph{The Euler-Jacobi-Lie Integrability Theorem},
Regular and Chaotic Dynamics, {\bf 18} (2013) 329-–343.



\bibitem{KS} Y. Kosmann-Schwarzbach, \emph{The Noether Theorems, Invariance and Conservation Laws in the Twentieth Century}, Springer, New York, 2011.

\bibitem{KKPS}
D. Krupka, O. Krupkova,  G. Prince, W. Sarlet, \emph{Contact symmetries of the Helmholtz form}, Differential Geometry and its Applications, {\bf 25} (2007) 518--542.

\bibitem{LMP} I. Lacirasella, J. C. Marrero, E. Pedron, \emph{Reduction of symplectic principal $\R$--bundles},
J. Phys. A: Math. Theor. {\bf 45} (2012) 325202 (29 pp), arXiv:1201.4690



\bibitem{LM} P. Libermann, C. Marle, \emph{Symplectic Geometry,
Analytical Mechanics}, Riedel, Dordrecht, 1987.

\bibitem{Noether} E. Noether, \emph{Invariante Variationsprobleme}, Nachrichten von der K\"oniglich Gesellschaft der Wissenschaften zu G\"ottingen, Mathematisch-physikalische Klasse (1918), 235--257.

\bibitem{M} Dj. Mu\v sicki, \emph{Noether's theorem for nonconservative systems in quasicoordinates}, Theor. Appl. Mech. {\bf 43} (2016) 1--17.

\bibitem{O} P. J. Olver, \emph{Applications of
Lie Groups to Differential Equations}, Springer-Verlag, 1986.

\bibitem{P} H. Poencar\'e, \emph{Les m\'ethodes nouvelles de la m\'echanique c\'eleste III. Invariant int\'egraux.
Solutions p\'eriodiques du deuxieme genre. Solutions doublement
asymptotiques}, Paris. Gauthier-Villars, 1899 (French).

\bibitem{GS3}
{G. A. Sardanashvily}, \emph{Time-dependent superintegrable
Hamiltonian systems}, Int. J. Geom. Methods Mod. Phys. {\bf 9}
(2012)  1220016 (10 pages).

\bibitem{GS2}
{G. A. Sardanashvily}, \emph{Noether's first theorem in Hamiltonian mechanics}, (2015) arXiv: 1510.03760.

\bibitem {SS} S. Simi\'c,
\emph{On Noetherian approach to integrable cases of the motion of heavy top},
Bull. Cl. Sci. Math. Nat. Sci. Math. No. 25 (2000), 133--156.

\bibitem{J} J. Struckmeier, C. Riedel, \emph{Noether's theorem and
Lie symmetries for time-dependent Hamilton-Lagrange systems},
Phyaical Review E, {\bf 66} (2002) 066605, 12 pp.

\bibitem{Wh}
E. T. Whittaker, \emph{A treatise on the analytic dynamics of
particles and rigid bodies}, Cambridge, 1904.

\end{thebibliography}
\end{document}